\documentclass[journal,onecolumn,10pt,twoside]{IEEEtranTCOM}
\newtheorem{definition}{{Definition}}
\newtheorem{theorem}{{Theorem}}
\newtheorem{lemma}{{Lemma}}

\usepackage{cite}
\usepackage{amsmath,amssymb,amsfonts}
\usepackage{algorithmic}
\usepackage{graphicx}
\usepackage{textcomp}
\usepackage{xcolor}
\usepackage{bbm}

\normalsize

\ifCLASSINFOpdf
\else
\fi

\begin{document}
%
% paper title
% can use linebreaks \\ within to get better formatting as desired
\title{Feedback Does Not Increase the Capacity of  Approximately Memoryless Surjective POST Channels}
%
%
% author names and IEEE memberships
% note positions of commas and nonbreaking spaces ( ~ ) LaTeX will not break
% a structure at a ~ so this keeps an author's name from being broken across
% two lines.
% use \thanks{} to gain access to the first footnote area
% a separate \thanks must be used for each paragraph as LaTeX2e's \thanks
% was not built to handle multiple paragraphs
%

\author{Xiaojing~Zhang, Jun~Chen, Guanghui~Wang % stops a space
%\thanks{Xiaojing Zhang and Guanghui Wang are with the School of Mathematics, Shandong University, Jinan, Shandong 250100, China (e-mail: xiaojingzhang@mail.sdu.edu.cn, ghwang@sdu.edu.cn).}
%\thanks{Jun Chen is with the Department of Electrical and Computer Engineering, McMaster University, Hamilton, ON L8S 4K1, Canada (e-mail: chenjun@mcmaster.ca).}
%\thanks{The work of Xiaojing Zhang and Guanghui Wang was supported in part by the National Key R\&D Program of China under Grant No. 2023YFA1009600.}
%\thanks{Guanghui Wang is with the School of Mathematics and the State Key Laboratory of Cryptography and Digital Economy Security, Shandong University, Jinan, Shandong 250000, China (e-mail: ghwang@sdu.edu.cn)}
% <-this % stops a space
}

\maketitle

\begin{abstract}
%\boldmath
We study a class of finite-state channels, known as POST channels, in which the previous channel output serves as the current state. A POST channel is deemed approximately memoryless when the state-dependent transition matrices are sufficiently close to one another. For this family of channels, under a surjectivity condition on the associated memoryless reference channel, we show that the feedback capacity coincides with the non-feedback capacity. Consequently, for almost all approximately memoryless POST channels whose input alphabet size is no smaller than the output alphabet size, feedback provides no capacity gain. This result extends Shannon’s classical theorem on discrete memoryless channels and demonstrates that the phenomenon holds well beyond the strictly memoryless case.  
\end{abstract}
% IEEEtran.cls defaults to using nonbold math in the Abstract.
% This preserves the distinction between vectors and scalars. However,
% if the journal you are submitting to favors bold math in the abstract,
% then you can use LaTeX's standard command \boldmath at the very start
% of the abstract to achieve this. Many IEEE journals frown on math
% in the abstract anyway.

% Note that keywords are not normally used for peerreview papers.
\begin{IEEEkeywords}
Channel state,	feedback capacity, indecomposable channel, Markov decision process,   POST channel.
\end{IEEEkeywords}

% For peer review papers, you can put extra information on the cover
% page as needed:
% \ifCLASSOPTIONpeerreview
% \begin{center} \bfseries EDICS Category: 3-BBND \end{center}
% \fi
%
% For peerreview papers, this IEEEtran command inserts a page break and
% creates the second title. It will be ignored for other modes.
\IEEEpeerreviewmaketitle

\section{Introduction}\label{sec:introduction}

\IEEEPARstart{O}{ne}
 of the most celebrated results in information theory is Shannon’s theorem, which asserts that feedback does not increase the capacity of discrete memoryless channels \cite[Theorem 6]{Shannon56} (see also \cite[Theorem 8.12.1]{CT91}). This naturally raises the question of whether the converse holds: if feedback fails to increase a channel’s capacity, must the channel be memoryless? Equivalently, does every channel with memory experience a capacity gain when feedback is allowed? If true, this would mean that Shannon’s theorem identifies a property unique to memoryless channels.
 
It turns out that there exist channels with memory whose feedback capacity coincides with their non-feedback capacity. A well-known example is given in \cite{Alajaji95}, which shows that feedback does not increase the capacity of additive-noise channels, irrespective of the memory structure of the noise process. The reason is that the cyclic symmetry of the modulo-addition operation inherently favors i.i.d. uniform inputs, leaving no room for feedback to improve performance. Other examples of channels with memory for which feedback provides no capacity gain can also be constructed. However, such constructions are typically rather contrived: they either exhibit symmetry properties that eliminate the need for input adaptation, rendering feedback useless, or they possess special structural features that allow the encoder to infer the channel output without feedback. For these reasons, it remains widely believed that, for a generic channel with memory, the feedback capacity should exceed the non-feedback capacity. This intuition is natural, as feedback enables the encoder to obtain more information about the evolving channel state and, consequently, to adapt the transmission strategy to achieve a higher communication rate.

In this paper, we study a class of finite-state channels known as POST channels, in which the previous channel output serves as the current channel state. A POST channel is said to be approximately memoryless when the state-dependent transition matrices  are sufficiently close to one another. 
For this family of channels, under a surjectivity condition on the associated memoryless reference channel, we establish that the feedback capacity coincides with the non-feedback capacity. Consequently, for almost all approximately memoryless POST channels whose input alphabet size is no smaller than the output alphabet size, feedback does not increase capacity. Our work is motivated by the striking result in \cite[Section V]{PAW14}, which shows that feedback offers no capacity gain for the binary POST$(a,b)$ channel---a binary-input, binary-output POST channel whose crossover probabilities are $(1-a,1-b)$ in state $0$ and $(1-b,1-a)$ in state $1$. However, due to the strong symmetry of this example, it is difficult to determine whether the phenomenon is truly fundamental or merely coincidental---namely, whether a clever coding scheme could effectively transform the POST$(a,b)$ channel into a memoryless channel with crossover probabilities $(1-a,1-b)$. Our result confirms that the phenomenon is in fact much more general. Moreover, since approximately memoryless POST channels lack such symmetry, our result also rules out the possibility that the channel model is implicitly degraded to a memoryless one. From a broader perspective, our findings extend Shannon’s classical theorem, as a discrete memoryless channel can be regarded as a degenerate case of an approximately memoryless POST channel in which all state-dependent transition matrices coincide. Thus, rather than viewing memoryless channels as the only channels for which feedback fails to increase capacity, one should view them as an extreme instance of a much larger class exhibiting this property.

The remainder of the paper is organized as follows. Section \ref{sec:definition} reviews the necessary background on POST channels and presents the information-theoretic characterizations of their non-feedback and feedback capacities, along with a formal statement of the main result. Section \ref{sec:proof} contains the proof of the main result. A conceptual illustration of the core ideas in this work is presented in Section \ref{sec:discussion}.
Finally, Section \ref{sec:conclusion} offers concluding remarks and discusses possible directions for future research.

We adopt the following notational conventions. Random variables are denoted by uppercase letters (e.g., 
$X$), and their realizations by lowercase letters (e.g., 
$x$). For sequences, we write 
$X^j_i$
as shorthand for 
$(X_i,X_{i+1},\ldots,X_j)$, and use 
$X^j$
when 
$i=1$; $X^j_i$ is taken to be a null variable when $i>j$. Calligraphic letters denote sets (e.g., 
$\mathcal{S}$), with 
$|\mathcal{S}|$ representing the cardinality of 
$\mathcal{S}$. Matrices are written in uppercase blackboard bold (e.g., 
$\mathbb{Q}$), while lowercase blackboard bold symbols (e.g., 
$\mathbbm{q}$) denote column vectors. In particular, 
$\mathbb{I}$ denotes the identity matrix and 
$\mathbbm{1}$ the all-ones column vector, with dimensions clear from context. The infinity norm 
$\|\cdot\|_{\infty}$
refers to the vector max norm; when applied to a matrix, it denotes the maximum absolute entry. We use $\|\cdot\|_1$,
$\|\cdot\|_2$
and 
$\|\cdot\|_F$
for the Manhattan, spectral, and Frobenius norms, respectively, and 
$1\{\cdot\}$ for the indicator function. Unless otherwise specified, all logarithms are taken to base 
$e$.

\section{Non-Feedback Capacity vs. Feedback Capacity}\label{sec:definition}

A POST channel is a finite-state channel where the previous channel output serves as the current channel state. Formally, a POST channel is characterized by a transition kernel $Q(y|x,y')$, with $x\in\mathcal{X}$ and $y,y'\in\mathcal{Y}$, which specifies  conditional probability of the  output  $Y_t=y$ given the  input $X_t=x$ and the  state $Y_{t-1}=y'$ at time $t$ for  $t=1,2,\ldots$ Without loss of generality, we assume $\mathcal{X}=\{0,1,\ldots,|\mathcal{X}|-1\}$ and $\mathcal{Y}=\{0,1,\ldots,|\mathcal{Y}|-1\}$ with $|\mathcal{X}|, |\mathcal{Y}|\geq 2$.

To facilitate analysis, it is  useful to introduce  matrix representations\footnote{For notational convenience, matrix rows and columns are indexed starting from 
$0$, unless stated otherwise.} of the transition kernel $Q$. Specifically, for each $y'\in\mathcal{Y}$, let $\mathbb{Q}^{(c)}_{y'}$ denote the channel transition matrix corresponding to state $y'$, with its $(y,x)$-entry  $\mathbb{Q}^{(c)}_{y'}(y,x):=Q(y|x,y')$. Likewise, for each $x\in\mathcal{X}$, let  $\mathbb{Q}^{(s)}_{x}$ denote the state transition matrix corresponding to channel input $x$, with its $(y,y')$-entry   $\mathbb{Q}^{(s)}_{x}(y,y'):=Q(y|x,y')$. These matrices provide two complementary views of the POST channel dynamics---one indexed by the state and the other by the input.

%define POST channel, is specified by transition kernel $W$, which determines for all $t$

\subsection{Non-Feedback Capacity}

A length-$n$ non-feedback coding system for a POST channel $Q$ consists of an encoding function $f^{(n)}:\mathcal{M}^{(n)}\rightarrow\mathcal{X}^n$, which maps a message $M$, uniformly distributed over $\mathcal{M}^{(n)}$, to a channel input sequence $X^n$, and a decoding function $g^{(n)}:\mathcal{Y}^n\rightarrow\mathcal{M}^{(n)}$, which maps the channel output sequence $Y^n$ to a reconstructed message $\hat{M}$. Conditioned on the initial state $Y_0=y_0$, the  joint distribution induced by this non-feedback coding system factors as
\begin{align}
	P_{MX^nY^n\hat{M}|Y_0}(m,x^n,y^n,\hat{m}|y_0)=\frac{1}{|\mathcal{M}^{(n)}|}1\{x^n=f^{(n)}(m)\}\left(\prod_{t=1}^nQ(y_t|x_t,y_{t-1})\right)1\{\hat{m}=g^{(n)}(y^n)\}.\label{eq:non-feedback distribution}
\end{align}

\begin{definition}[Non-Feedback Capacity]
A rate $R$ is said to be worst-case achievable without feedback for a POST channel $Q$ if there exists a sequence of non-feedback coding systems $\{(f^{(n)},g^{(n)})\}_{n=1}^{\infty}$ satisfying
\begin{align}
	&\lim\limits_{n\rightarrow\infty}\frac{1}{n}\log|\mathcal{M}^{(n)}|= R,\label{eq:worst_rate}\\
	&\lim\limits_{n\rightarrow\infty}\max\limits_{y_0\in\mathcal{Y}}\mathrm{Pr}\{\hat{M}\neq M|Y_0=y_0\}=0.\label{eq:worst_error}
\end{align}
A best-case achievable rate without feedback for $Q$ is defined similarly, with  \eqref{eq:worst_error} replaced by
\begin{align}
	\lim\limits_{n\rightarrow\infty}\min\limits_{y_0\in\mathcal{Y}}\mathrm{Pr}\{\hat{M}\neq M|Y_0=y_0\}=0.\label{eq:best_error}
\end{align}
The maximum worst-case achievable rate and the maximum best-case achievable rate without feedback are denoted by $\underline{C}(Q)$ and $\overline{C}(Q)$, respectively.  When these two quantities coincide, their common value is denoted by $C(Q)$ and is referred to as the non-feedback capacity of the POST channel $Q$. 
\end{definition}

A POST channel $Q$ is said to be indecomposable if, for every $\epsilon>0$, there exists an integer $N$ such that for all $n\geq N$, the product of any $n$ matrices from $\{\mathbb{Q}^{(s)}_{0},\mathbb{Q}^{(s)}_{1},\ldots,\mathbb{Q}^{(s)}_{|\mathcal{X}|-1}\}$ has all entries within each row differing by at most $\epsilon$. Since the $(y,y')$-entry of $\mathbb{Q}^{(s)}_{x_n}\cdots\mathbb{Q}^{(s)}_{x_2}\mathbb{Q}^{(s)}_{x_1}$ represents the conditional probability of $Y_t=y$ given $X^n=x^n$ and $Y_0=y'$, the indecomposability condition ensures that the influence of the initial state on future states becomes negligible over time. Define
\begin{align}
	C_n(Q):=\max\limits_{P_{X^nY_0}}\frac{1}{n}I(X^n;Y^n|Y_0),\label{eq:C_n(Q)}
\end{align}
where the joint distribution underlying the conditional mutual information expression factors as
\begin{align}
	P_{X^nY^nY_0}(x^n,y^n,y_0)=P_{X^nY_0}(x^n,y_0)\left(\prod\limits_{t=1}^nQ(y_t|x_t,y_{t-1})\right).
\end{align} 
As shown in
\cite[Theorems 3 and 4]{BBT58}
and
\cite[Theorems 4.6.2, 4.6.4, and 5.9.2]{Gallager68}, for any indecomposable POST channel $Q$,
\begin{align}
	C(Q)=C^*(Q):=\lim\limits_{n\rightarrow\infty}C_n(Q).\label{eq:nonfeedback_capacity}
\end{align}
In general, computing 
$C(Q)$ is difficult because its characterization involves the limit of an 
$n$-letter expression.

\subsection{Feedback Capacity}

We now turn to the setting with feedback. A length-$n$ feedback coding system for a POST channel $Q$ consists of  an encoding chain $f^{(n)}_1,f^{(n)}_2,\ldots,f^{(n)}_n$, with $f^{(n)}_t: \mathcal{M}^{(n)}\times\mathcal{Y}^{t-1}\rightarrow\mathcal{X}$ generating the channel input symbol $X_t$ based on the  given message $M$ and the past output symbols $Y^{t-1}$ for $t=1,2,\ldots,n$, and a decoding function $g^{(n)}:\mathcal{Y}^n\rightarrow\mathcal{M}^{(n)}$, which maps the channel output sequence $Y^n$ to a reconstructed message $\hat{M}$. Conditioned on the initial state $Y_0=y_0$, the  joint distribution induced by this feedback coding system factors as
\begin{align}
	P_{MX^nY^n\hat{M}|Y_0}(m,x^n,y^n,\hat{m}|y_0)=\frac{1}{|\mathcal{M}^{(n)}|}\left(\prod\limits_{t=1}^n1\{x_t=f^{(n)}_t(m,y^{t-1})\}\right)\left(\prod_{t=1}^nQ(y_t|x_t,y_{t-1})\right)1\{\hat{m}=g^{(n)}(y^n)\}.\label{eq:feedback distribution}
\end{align}

\begin{definition}[Feedback Capacity]
A rate $R$ is said to be worst-case achievable with feedback for a POST channel $Q$ if there exists a sequence of feedback coding systems $\{(f^{(n)}_1,f^{(n)}_2,\ldots,f^{(n)}_n, g^{(n)})\}_{n=1}^{\infty}$ satisfying \eqref{eq:worst_rate} and \eqref{eq:worst_error}. A best-case achievable rate with feedback for $Q$ is defined similarly, with  \eqref{eq:worst_error} replaced by \eqref{eq:best_error}.
The maximum worst-case achievable rate  and the maximum best-case achievable rate with feedback are denoted by $\underline{C}_f(Q)$ and $\overline{C}_f(Q)$, respectively.  When these two quantities coincide, their common value is denoted by $C_f(Q)$ and is referred to as the feedback capacity of the POST channel $Q$. 
\end{definition}

A POST channel $Q$ is said to be connected if, for any $y,y'\in\mathcal{Y}$, the product of some matrices from $\{\mathbb{Q}^{(s)}_{0},\mathbb{Q}^{(s)}_{1},\ldots,\mathbb{Q}^{(s)}_{|\mathcal{X}|-1}\}$ has a positive $(y,y')$-entry. This connectivity condition guarantees that, for any initial state 
$y'$
and target state 
$y$, there exists a sequence of channel inputs under which the channel can reach 
$y$ from 
$y'$
with positive probability.
As shown in \cite[Theorem 1]{SSP22} (see also \cite[Theorems 1, 4, and 5]{CB05}), for any connected POST channel $Q$,
\begin{align}
	C_f(Q)=C^*_f(Q):=&\max\limits_{P_{XY'}} I(X;Y|Y')\label{eq:feedback_capacity}\\
	&\mbox{s.t.}\quad P_{Y}=P_{Y'},\label{eq:feedback_constraint}
\end{align}
where the joint distribution underlying the mutual information expression factors as
\begin{align}
	P_{XYY'}(x,y,y')=P_{XY'}(x,y')Q(y|x,y'),\label{eq:feedback_factorization}
\end{align}
and $P_Y$ denotes the marginal distribution of $Y$ induced by this joint distribution. Thanks to this single-letter characterization, $C_f(Q)$ is generally more amenable to evaluation that $C(Q)$. In fact, the maximization problem in \eqref{eq:feedback_capacity}--\eqref{eq:feedback_constraint} is a convex program, which can be readily solved using standard convex-optimization methods.

\subsection{Main Results}

A POST channel $Q$ reduces to a memoryless channel if $Q(y|x,y')$ does not depend on $y'$. We say that $Q$ is approximately memoryless if $\max_{y',y''\in\mathcal{Y}}\|\mathbb{Q}^{(c)}_{y'}-\mathbb{Q}^{(c)}_{y''}\|_{\infty}$ is small. For ease of presentation, it is often more convenient to express this requirement by measuring the proximity of $Q$ to a fixed memoryless reference  channel $W$. Specifically, we will refer to a POST channel $Q$ as $W$-centered $\delta$-approximately memoryless if
\begin{align} \max_{y'\in\mathcal{Y}}\|\mathbb{Q}^{(c)}_{y'}-\mathbb{W}\|_{\infty}\leq\delta,\label{eq:infinity_norm}
\end{align}
where $\mathbb{W}$ denotes the matrix representation of $W$, with its $(y,x)$-entry $\mathbb{W}(y,x):=W(y|x)$.

 %This convention will be adopted throughout the remainder of the paper.

The following result shows that a $\delta$-approximately memoryless POST channel is guaranteed to be indecomposable and connected
for $\delta$ chosen sufficiently small.

\begin{lemma}\label{lem:indecomposable_connected}
A $W$-centered $\delta$-approximately memoryless POST channel $Q$ is indecomposable whenever
\begin{align} \delta<\min_{x\in\mathcal{X}}\max_{y\in\mathcal{Y}}W(y|x)\label{eq:cond_indecomposable}
\end{align}	
 and  connected whenever
 \begin{align}
 \delta<\min_{y\in\mathcal{Y}}\max_{x\in\mathcal{X}}W(y|x).\label{eq:cond_connected}
 \end{align} 
\end{lemma}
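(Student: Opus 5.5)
For indecomposability, I will use a Doeblin-type (scrambling) argument on the state transition matrices $\mathbb{Q}^{(s)}_x$, which are column-stochastic: column $y'$ is the distribution of $Y_t$ given $X_t=x$, $Y_{t-1}=y'$. For each $x\in\mathcal{X}$, fix $y^*(x)\in\arg\max_{y}W(y|x)$ and set $\eta:=\min_{x}\max_y W(y|x)-\delta>0$. By \eqref{eq:infinity_norm}, $Q(y^*(x)|x,y')\ge W(y^*(x)|x)-\delta\ge\eta$ for every $y'$. So every column of $\mathbb{Q}^{(s)}_x$ puts mass at least $\eta$ on the common row $y^*(x)$. Hence for any two columns $\mathbbm{p},\mathbbm{q}$ we get $\sum_y\min(\mathbbm{p}(y),\mathbbm{q}(y))\ge\eta$, i.e.\ the Dobrushin coefficient satisfies $\tau(\mathbb{Q}^{(s)}_x):=\tfrac12\max_{y',y''}\|\mathbbm{p}_{y'}-\mathbbm{p}_{y''}\|_1\le 1-\eta$.

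Next I will use submultiplicativity of $\tau$ under products of stochastic matrices (the standard contraction property of total variation): $\tau(\mathbb{Q}^{(s)}_{x_n}\cdots\mathbb{Q}^{(s)}_{x_1})\le(1-\eta)^n$, uniformly over input sequences. Consequently any two columns $y',y''$ of the product differ by at most $2(1-\eta)^n$ in $\ell_1$, and hence in every entry. This gives exactly the indecomposability requirement: for fixed $y$, the entries over $y'$ differ by at most $\epsilon$ once $n\ge N$ with $2(1-\eta)^N\le\epsilon$. The paper phrases the condition as entries ``within each row'' differing, and with the indexing $(y,y')$ this is precisely the statement that the $y$-row is nearly constant across $y'$.

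For connectivity, I set $\eta':=\min_y\max_x W(y|x)-\delta>0$ and, for each target $y$, pick $x^\dagger(y)$ attaining $\max_x W(y|x)$. Then $Q(y|x^\dagger(y),y')\ge\eta'>0$ for all $y'$, so the single matrix $\mathbb{Q}^{(s)}_{x^\dagger(y)}$ already has a positive $(y,y')$-entry for every $y'$. That settles connectedness with products of length one.

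The only step requiring care is the contraction/submultiplicativity property of $\tau$. I would either cite it as standard or prove it briefly: write $\mathbb{P}=\mathbb{A}\mathbb{B}$, express the difference of two columns of $\mathbb{P}$ as $\mathbb{A}$ applied to a zero-sum vector $\mathbbm{v}$ (a difference of two columns of $\mathbb{B}$), and use $\|\mathbb{A}\mathbbm{v}\|_1\le\tau(\mathbb{A})\|\mathbbm{v}\|_1$ for zero-sum $\mathbbm{v}$, obtained by decomposing $\mathbbm{v}$ into its positive and negative parts. Everything else is immediate from the entrywise bound \eqref{eq:infinity_norm}.
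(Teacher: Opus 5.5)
Your proof is correct and follows the paper's route. The paper likewise bounds the scrambling coefficient $\lambda(\mathbb{Q}^{(s)}_x)$ (which equals your Dobrushin coefficient $\tau$) strictly below $1$, using the slightly sharper estimate $1-\sum_{y}\max\{W(y|x)-\delta,0\}$ rather than a single common row; it then cites Hajnal's weak-ergodicity result where you prove submultiplicativity directly, which also gives you the explicit uniform rate $(1-\eta)^n$. For connectivity, your single-matrix argument with $Q(y|x^\dagger(y),y')\geq\max_{x}W(y|x)-\delta>0$ is exactly the paper's.
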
 
\begin{IEEEproof}
	See Appendix \ref{app:indecomposable_connected}.
	\end{IEEEproof}

%Input-rich (needs to be defined in the abstract and introduction)

A memoryless channel $W$  is said to be surjective  if 
\begin{enumerate}
\item (strict complementary slackness) 
\begin{align}
	&P^{(W)}_X(x)>0,\quad  x\in\mathcal{S},\\
	&\sum\limits_{y\in\mathcal{Y}}W(y|x)\log\frac{W(y|x)}{P^{(W)}_Y(y)}=C(W),\quad x\in\mathcal{S},\label{eq:capacity}\\
	&\sum\limits_{y\in\mathcal{Y}}W(y|x)\log\frac{W(y|x)}{P^{(W)}_Y(y)}<C(W),\quad x\in\mathcal{X}\backslash\mathcal{S},\label{eq:slackness3}
\end{align}
for some $\mathcal{S}\subseteq\mathcal{X}$ with $|\mathcal{S}|=|\mathcal{Y}|$,
where $C(W)$, $P^{(W)}_X$, and $P^{(W)}_Y$ denote the capacity, the capacity-achieving input distribution, and the capacity-achieving output distribution of $W$, respectively,
	
	\item (full rankness) the submatrix of 
	$\mathbb{W}$ consisting of the columns indexed by $\mathcal{S}$ has full rank.
\end{enumerate}
Correspondingly, we refer to $Q$ as a $W$-centered $\delta$-approximately memoryless surjective POST channel if \eqref{eq:infinity_norm} holds for a  memoryless surjective reference channel $W$.

The strict-complementary-slackness part of the surjectivity condition is essentially a strengthening of the standard Karush–Kuhn–Tucker conditions for the capacity-achieving input distribution of $W$. Indeed,  the latter can be stated as \cite[Theorem 4.5.1]{Gallager68}
\begin{align}
	&\sum\limits_{y\in\mathcal{Y}}W(y|x)\log\frac{W(y|x)}{P^{(W)}_Y(y)}=C(W),\quad x\in\mathcal{X} \mbox{ with }P^{(W)}_X(x)>0,\\
	&\sum\limits_{y\in\mathcal{Y}}W(y|x)\log\frac{W(y|x)}{P^{(W)}_Y(y)}\leq C(W),\quad x\in\mathcal{X} \mbox{ with }P^{(W)}_X(x)=0.\label{eq:notstrict}
\end{align}
In addition, there exists a subset $\mathcal{S}\subseteq\mathcal{X}$ with $|\mathcal{S}|\leq|\mathcal{Y}|$ \cite[Corollary 3, p. 96]{Gallager68} such that
\begin{align}
P^{(W)}_X(x)>0,\quad x\in\mathcal{S}.
\end{align}
Strict complementary slackness strengthens these requirements by enforcing 
$|\mathcal{S}|=|\mathcal{Y}|$ (instead of $|\mathcal{S}|\leq |\mathcal{Y}|$) and replacing \eqref{eq:notstrict} with a strict inequality.

The surjectivity condition necessarily requires $|\mathcal{X}|\geq |\mathcal{Y}|$. Moreover, since the capacity-achieving output distribution $P^{(W)}_Y$ is unique \cite[Corollary 2, p. 96]{Gallager68}, this condition  ensures that the capacity-achieving input distribution $P^{(W)}_X$ is uniquely determined\footnote{In light of \eqref{eq:slackness3}, we have $P^{(W)}_X(x)=0$ for $x\in\mathcal{X}\backslash\mathcal{S}$. Moreover, the full-rankness part of the surjectivity condition ensures that
the values of $P^{(W)}_X(x)$ for $x\in\mathcal{S}$ are uniquely determined by $P_Y^{(W)}$.} and assigns positive probability only to symbols in $\mathcal{S}$. Evidently,  
almost all memoryless channels with the input alphabet size at least as large as the output alphabet size satisfy this surjectivity condition, because the set of channels violating it lies in a lower-dimensional subspace of the full channel space.
The rationale for this condition and its technical implications will be clarified in Section \ref{sec:discussion}.

We now present our  main result, which demonstrates that the capacity of an approximately memoryless surjective POST channel is not enhanced by the presence of feedback.

\begin{theorem}\label{thm:main}
Given any  memoryless surjective channel $W$, there exists $\delta>0$ such that
\begin{align}
	C(Q)=C_f(Q)\label{eq:c=cf1}
\end{align}
for all $W$-centered $\delta$-approximately memoryless POST channels $Q$.
\end{theorem}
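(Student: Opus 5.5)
The inequality $C(Q)\le C_f(Q)$ holds trivially, so the work is in proving $C_f(Q)\le C(Q)$. First I would fix $\delta$ smaller than both thresholds \eqref{eq:cond_indecomposable} and \eqref{eq:cond_connected}. By Lemma~\ref{lem:indecomposable_connected}, $Q$ is then indecomposable and connected. Hence $C(Q)=C^*(Q)$ by \eqref{eq:nonfeedback_capacity}, and $C_f(Q)=C^*_f(Q)$ by \eqref{eq:feedback_capacity}--\eqref{eq:feedback_constraint}. It therefore suffices to prove $C^*_f(Q)\le C^*(Q)$. I would do this by exhibiting a non-feedback input process whose normalized mutual information approaches the value of the single-letter program \eqref{eq:feedback_capacity}, and certifying optimality of that value through the KKT conditions of the convex program.

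\emph{Step 1 (non-feedback lower bound).} Take $X_1,X_2,\ldots$ i.i.d.\ with law $P_X$, independent of $Y_0$. Then $Y^n$ is a Markov chain, and $X_t$ is independent of $Y_{t-1}$. The chain rule gives
\begin{align*}
I(X^n;Y^n|Y_0)=\sum_{t=1}^n\bigl[H(Y_t|Y^{t-1},Y_0)-H(Y_t|X_t,Y_{t-1})\bigr]\geq\sum_{t=1}^n I(X_t;Y_t|Y_{t-1}),
\end{align*}
where the inequality uses $H(Y_t|Y^{t-1},Y_0)\ge H(Y_t|Y_{t-1})$ only if needed; in fact Markovity gives equality. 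By indecomposability, $P_{Y_{t-1}}$ converges to the stationary law $\pi_{P_X}$ of this chain. Consequently
\begin{align*}
C^*(Q)\ge G(P_X):=I(X;Y|Y')\quad\mbox{evaluated at } P_{XY'}=P_X\times\pi_{P_X}.
\end{align*}
This value is feasible for \eqref{eq:feedback_capacity}--\eqref{eq:feedback_constraint}, since stationarity of $\pi_{P_X}$ is exactly the constraint $P_Y=P_{Y'}$. If it turns out that i.i.d.\ inputs are not rich enough, I would enlarge this class to input processes that are Markov in the inputs of some fixed order. The same computation goes through with $(X_{t-1},Y_{t-1})$ as a joint state.

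\emph{Step 2 (structure of the feedback optimum).} Next I would show that, for small $\delta$, the feedback program has an optimizer of product form $P_{X|Y'}(\cdot|y')=P_X$, i.e.\ one not depending on $y'$. The argument uses the Lagrangian of \eqref{eq:feedback_capacity} with multipliers $\lambda(y)$ for the constraint \eqref{eq:feedback_constraint}. The KKT conditions say that for every $y'$ and every $x$,
\begin{align*}
\sum_y Q(y|x,y')\log\frac{Q(y|x,y')}{P_{Y|Y'}(y|y')}+\sum_y Q(y|x,y')\lambda(y)-\lambda(y')\le c,
\end{align*}
with equality where $P_{X|Y'}(x|y')>0$. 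Both sides of the certificate are needed: sufficiency of these conditions (the program is convex) gives an upper bound on $C^*_f(Q)$, which must then be matched to $G(P_X)$. At $\delta=0$ everything reduces to Shannon's setting. The optimum is $P^{(W)}_X$ with support $\mathcal{S}$, $\lambda\equiv0$, and $c=C(W)$, and the inequalities for $x\notin\mathcal{S}$ are strict by \eqref{eq:slackness3}.

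\emph{Step 3 (perturbation, the main obstacle).} For $\delta>0$ I would invoke the implicit function theorem. The unknowns are the conditional inputs $P_{X|Y'}(\cdot|y')$ on $\mathcal{S}$, together with $\lambda$ and $c$. Full rankness of the $\mathcal{S}$-columns of $\mathbb{W}$ makes the Jacobian of the equality system invertible at $\delta=0$. Strict complementary slackness keeps the support equal to $\mathcal{S}$ and keeps the inactive inequalities strict for small $\delta$; this is what yields a uniform $\delta$ over all $Q$ in the ball \eqref{eq:infinity_norm}. The hard part is showing that the resulting KKT point is attained by an input law that a non-feedback encoder can realize, so that it matches the lower bound of Step 1. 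I expect the product form from Step 2 to fail in general, because equalities over all pairs $(x,y')\in\mathcal{S}\times\mathcal{Y}$ are too many for a $y'$-independent $P_X$. My first attempt would therefore be to use the invertibility of $\mathbb{Q}^{(c)}_{y'}$ restricted to $\mathcal{S}$ to compare $H(Y|Y')$ under the feedback optimum with the output entropy rate of a suitably designed non-feedback process, via the identity $I=H(Y^n)-\sum H(Y_t|X_t,Y_{t-1})$. The goal would be to show that the gap between the two vanishes to first order in $\delta$ and is nonpositive by the KKT inequality. If this fails, I would fall back on the richer Markov-input class from Step 1.
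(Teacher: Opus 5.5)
There is a genuine gap. Your reduction to $C^*_f(Q)\le C^*(Q)$ via Lemma~\ref{lem:indecomposable_connected} is fine. Your Step~3 analysis of the feedback optimizer also parallels Lemmas~\ref{lem:uniform_convergence} and~\ref{lem:refined_version}: the support stays on $\mathcal{S}$ by strict complementary slackness, and the estimates are uniform over the $\delta$-ball. (You would use an implicit-function argument where the paper uses compactness plus uniqueness.) But you never produce a non-feedback input whose rate equals $C^*_f(Q)$, and that is the only step that proves the theorem.

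As you suspect, the i.i.d.\ bound $G(P_X)$ cannot close the gap. The KKT equalities on $\mathcal{S}\times\mathcal{Y}$ number $|\mathcal{Y}|^2$, while a $y'$-independent $P_X$ has only $|\mathcal{Y}|-1$ free parameters. So generically the feedback optimizer has $P^*_{X|Y'}$ depending on $y'$. By uniqueness of that optimizer, $G(P_X)<C^*_f(Q)$ then holds strictly for every $P_X$.

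Your fallbacks do not repair this. With inputs that are Markov of fixed order, $(X_t,Y_t)$ is a Markov chain, but $\{Y_t\}$ is only a hidden Markov process. Hence $H(Y_t|Y_0^{t-1})\le H(Y_t|Y_{t-1})$ is generally strict, and the computation does not reduce to $I(X;Y|Y')$. Likewise, showing that an entropy gap vanishes to first order in $\delta$ cannot give the exact equality $C(Q)=C_f(Q)$. By Theorem~\ref{thm:necessary_sufficient}, exact equality is equivalent to exactly simulating the feedback-induced Markov output without feedback.

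The missing idea is that exact simulation. Take the stationary chain with initial law $P^*_{Y'}$ and kernel $P^*_{Y|Y'}$. For every $n$ and $y_0$, you need a probability vector $\mathbbm{p}$ on $\mathcal{X}^n$ with $\mathbb{Q}^{(n)}_{y_0}\mathbbm{p}=\mathbbm{q}^{(n)}_{y_0}$. This $\mathbbm{p}$ is a general block input law depending on $y_0$, not an i.i.d.\ or finite-order Markov one. Given it, the output is exactly that chain, and full rank of each $\mathbb{Q}^{(c)}_{y'}$ forces $P_{X_tY_{t-1}}=P^*_{XY'}$. Then $I(X^n;Y^n|Y_0)=nC^*_f(Q)$.

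When $|\mathcal{X}|=|\mathcal{Y}|$ (the larger-input case reduces to this via the support result), $\mathbb{Q}^{(n)}_{y_0}$ has an explicit recursive inverse. So $\mathbbm{p}^{(n)}_{y_0}$ is unique and automatically sums to one. The entire difficulty is proving it is nonnegative for all $n$ with a single $\delta$.

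The paper does this through the recursion $\mathbbm{p}^{(n)}_{y_0}(x^n)=\sum_{y_1}\mathbb{G}_{y_0}(x_1,y_1)\mathbbm{q}^*_{y_0}(y_1)\mathbbm{p}^{(n-1)}_{y_1}(x_2^n)$. It then runs an induction showing that the relative deviations $|\mathbbm{p}^{(n)}_{y_0'}(x^n)-\mathbbm{p}^{(n)}_{y_0}(x^n)|/\mathbbm{p}^{(n)}_{y_0}(x^n)$ stay below a fixed $\kappa$ determined by $W$. That keeps $\mathbbm{p}^{(n)}_{y_0}$ strictly positive uniformly in $n$.

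Your KKT/IFT analysis says nothing about whether this $n$-letter output law is realizable. As written, the proposal only establishes structure of the feedback optimizer and a lower bound that is generically strict.
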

\begin{IEEEproof}
	See Section \ref{sec:proof}.
	\end{IEEEproof}

%\begin{theorem}\label{thm:extension}
%Given any surjective memoryless channel $W$ with $|\mathcal{X}|>|\mathcal{Y}|$, there exists $\delta>0$ such that
%\begin{align}
%	C(Q)=C_f(Q)\label{eq:c=cf2}
%\end{align}
%for all $W$-centered $\delta$-approximately memoryless POST channels $Q$.
%\end{theorem}
%\begin{IEEEproof}
%	See Section \ref{sec:proof2}.
%	\end{IEEEproof}

Theorem \ref{thm:main}  should not be misinterpreted as implying that memory has no effect on the channel capacity. In fact, for a 
$W$-centered 
$\delta$-approximately memoryless surjective POST channel 
$Q$, both 
$C(Q)$ and 
$C_f(Q)$ generally differ from 
$C(W)$. It merely indicates that whatever impact the channel memory has on the capacity, this impact is identical with and without feedback.

\section{Proof of the Main Result}\label{sec:proof}

The proof of Theorem \ref{thm:main} proceeds in two steps. We first 
establish the desired result for the case $|\mathcal{X}|=|\mathcal{Y}|$,  then show that the case 
$|\mathcal{X}|>|\mathcal{Y}|$ can be reduced to the equal-alphabet case, thereby completing the proof.

\subsection{$|\mathcal{X}|=|\mathcal{Y}|$}\label{sec:proof1}

First observe that when $|\mathcal{X}|=|\mathcal{Y}|$, the surjectivity  condition simplifies to 
\begin{enumerate}
	\item  (positivity) $P^{(W)}_X(x)>0$, $x\in\mathcal{X}$,
	\item   (full rankness) $\mathbb{W}$ has full rank.
\end{enumerate}
By Weyl's inequality for singular values, 
\begin{align}
	\sigma_{\min}(\mathbb{W})-\sigma_{\min}(\mathbb{Q}^{(c)}_{y'})\leq\sigma_{\max}(\mathbb{W}-\mathbb{Q}^{(c)}_{y'}),
\end{align}
where $\sigma_{\min}(\cdot)$ and $\sigma_{\max}(\cdot)$ denote the minimum and maximum singular values, respectively. Since
\begin{align}
	\sigma_{\max}(\mathbb{W}-\mathbb{Q}^{(c)}_{y'})&=\|\mathbb{W}-\mathbb{Q}^{(c)}_{y'}\|_2\nonumber\\
	&\leq \|\mathbb{W}-\mathbb{Q}^{(c)}_{y'}\|_F\nonumber\\
	&=\sqrt{\sum\limits_{x\in\mathcal{X}}\sum\limits_{y\in\mathcal{Y}}(\mathbb{W}(y,x)-\mathbb{Q}_{y'}^{(c)}(y,x))^2}\nonumber\\
	&\leq|\mathcal{X}|\|\mathbb{W}-\mathbb{Q}^{(c)}_{y'}\|_{\infty},
\end{align}
choosing
\begin{align}
	\delta<\frac{\sigma_{\min}(\mathbb{W})}{|\mathcal{X}|}\label{eq:full_rank}
\end{align}
guarantees that every channel transition matrix $\mathbb{Q}^{(c)}_{y'}$ associated with a $W$-centered $\delta$-approximately memoryless POST channel $Q$ satisfies $\sigma_{\min}(\mathbb{Q}^{(c)}_{y'})>0$, and is thus full rank for all
$y'\in\mathcal{Y}$.
Moreover, in light of Lemma \ref{lem:indecomposable_connected}, choosing 
\begin{align}
\delta<\min\left\{\min\limits_{x\in\mathcal{X}}\max\limits_{y\in\mathcal{Y}} W(y|x),\min\limits_{y\in\mathcal{Y}}\max_{x\in\mathcal{X}}W(y|x)\right\}\label{eq:minmax}
\end{align}
 ensures the applicability of the capacity characterizations in \eqref{eq:nonfeedback_capacity} and \eqref{eq:feedback_capacity}. Note that
 \begin{align}
 \min\limits_{x\in\mathcal{X}}\max\limits_{y\in\mathcal{Y}} W(y|x)\geq\frac{1}{|\mathcal{Y}|}>0,
 \end{align}
and
 \begin{align}
 	\min\limits_{y\in\mathcal{Y}}\max_{x\in\mathcal{X}}W(y|x)>0
 	\end{align}
 since otherwise there would exist some $y^*\in\mathcal{Y}$ with $W(y^*|x)=0$ for all $x\in\mathcal{X}$, which is ruled out by the full-rankness part of the simplified surjectivity condition. In the sequel, we assume $\delta$ satisfies both \eqref{eq:full_rank} and \eqref{eq:minmax}.

 The next two lemmas characterize the limiting behavior of   the maximizer  of the  optimization problem in \eqref{eq:feedback_capacity}--\eqref{eq:feedback_constraint}.
 
\begin{lemma}\label{lem:limit}
When the POST channel $Q$ degenerates to a memoryless surjective channel $W$, the optimization 
problem in \eqref{eq:feedback_capacity}--\eqref{eq:feedback_constraint} admits a unique maximizer, which is given by the product distribution
$P^{(W)}_XP^{(W)}_Y$.
\end{lemma}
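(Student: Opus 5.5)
The plan is to bound the objective in \eqref{eq:feedback_capacity} by $C(W)$, show that the product distribution attains this bound while meeting \eqref{eq:feedback_constraint}, and then read off uniqueness from the equality conditions. When $Q(y|x,y')=W(y|x)$ for all $y'$, the factorization \eqref{eq:feedback_factorization} gives $P_{XYY'}(x,y,y')=P_{Y'}(y')P_{X|Y'}(x|y')W(y|x)$. Hence, for any feasible $P_{XY'}$,
\begin{align}
I(X;Y|Y')=\sum_{y'\in\mathcal{Y}:P_{Y'}(y')>0}P_{Y'}(y')\,I(P_{X|Y'=y'},W)\leq C(W),
\end{align}
where $I(P_X,W)$ denotes the mutual information of the memoryless channel $W$ under input distribution $P_X$. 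So the optimal value of \eqref{eq:feedback_capacity}--\eqref{eq:feedback_constraint} is at most $C(W)$.

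For achievability, I take $P_{XY'}=P^{(W)}_XP^{(W)}_Y$. Each conditional $P_{X|Y'=y'}$ then equals $P^{(W)}_X$, so the objective equals $C(W)$. The induced output marginal is $P_Y(y)=\sum_x P^{(W)}_X(x)W(y|x)=P^{(W)}_Y(y)=P_{Y'}(y)$, so the constraint \eqref{eq:feedback_constraint} holds. This distribution is therefore a maximizer and the optimal value is exactly $C(W)$.

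For uniqueness, let $P_{XY'}$ be any maximizer. Equality in the bound above forces $I(P_{X|Y'=y'},W)=C(W)$ for every $y'$ with $P_{Y'}(y')>0$, so each such conditional is a capacity-achieving input distribution of $W$. The key ingredient is that surjectivity makes the capacity-achieving input distribution unique; this is the only step that needs care. I would argue it as follows. Every capacity-achieving input distribution induces the unique capacity-achieving output distribution $P^{(W)}_Y$ \cite[Corollary 2, p. 96]{Gallager68}. By the KKT conditions, any capacity-achieving input distribution puts positive mass only on symbols $x$ whose divergence $\sum_y W(y|x)\log\frac{W(y|x)}{P^{(W)}_Y(y)}$ equals $C(W)$. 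By strict complementary slackness \eqref{eq:slackness3}, such an input distribution is therefore supported on $\mathcal{S}$. Its restriction to $\mathcal{S}$ then solves the linear system $\mathbb{W}_{\mathcal{S}}\,\mathbbm{p}=\mathbbm{p}^{(W)}_Y$, where $\mathbb{W}_{\mathcal{S}}$ is the square full-rank submatrix of $\mathbb{W}$ with columns indexed by $\mathcal{S}$. This forces it to equal $P^{(W)}_X$. In the present case $|\mathcal{X}|=|\mathcal{Y}|$ this reduces to the invertibility of $\mathbb{W}$ itself.

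It follows that $P_{X|Y'=y'}=P^{(W)}_X$ for every $y'$ in the support of $P_{Y'}$, so $X$ is independent of $Y'$ with $P_X=P^{(W)}_X$. Consequently $P_Y=P^{(W)}_Y$, and the constraint \eqref{eq:feedback_constraint} forces $P_{Y'}=P^{(W)}_Y$. The conditionals at zero-mass states $y'$ do not affect the joint distribution, so $P_{XY'}=P^{(W)}_XP^{(W)}_Y$, which proves that the maximizer is unique.
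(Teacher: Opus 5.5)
Your proposal is correct and follows essentially the same route as the paper. You decompose $I(X;Y|Y')$ over $y'$ to bound it by $C(W)$, use uniqueness of the capacity-achieving input distribution $P^{(W)}_X$ to force $P_{XY'}=P^{(W)}_XP_{Y'}$ at equality, and then let the constraint \eqref{eq:feedback_constraint} pin down $P_{Y'}=P^{(W)}_Y$. The extra details you add are also correct: the explicit feasibility check of the product distribution, and the derivation of uniqueness of $P^{(W)}_X$ from strict complementary slackness and full rankness, which the paper only sketches in a footnote.
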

\begin{IEEEproof}
		Observe that
	\begin{align}
		I(X;Y|Y')&=\sum\limits_{y'\in\mathcal{Y}}P_{Y'}(y')I(X;Y|Y'=y')\nonumber\\
		&\leq\sum\limits_{y'\in\mathcal{Y}}P_{Y'}(y')C(W)\nonumber\\
		&=C(W).
	\end{align}
	 Since the capacity-achieving input distribution for the memoryless surjective channel $W$ is uniquely given by $P^{(W)}_X$, the above upper bound is attained if and only if $P_{XY'}=P^{(W)}_XP_{Y'}$. 
	The corresponding output distribution is then 
	 $P_Y=P^{(W)}_Y$. Invoking the constraint \eqref{eq:feedback_constraint} yields $P_{Y'}=P^{(W)}_Y$, completing the proof.
	\end{IEEEproof}
 
\begin{lemma}\label{lem:uniform_convergence}
Given a  memoryless surjective channel $W$ with $|\mathcal{X}|=|\mathcal{Y}|$, the maximizer $P^*_{XY'}$ of the  optimization problem in \eqref{eq:feedback_capacity}--\eqref{eq:feedback_constraint} is unique for any $W$-centered $\delta$-approximately memoryless POST channel $Q$ when $\delta$ is sufficiently small, and it converges uniformly to $P^{(W)}_XP^{(W)}_Y$ over all such POST channels as $\delta\rightarrow 0$.
\end{lemma}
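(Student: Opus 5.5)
We will establish uniqueness and uniform convergence in two parts: a compactness and continuity argument for convergence, and a strict-concavity argument for uniqueness. The natural parametrization is by the conditional input distributions $P_{X|Y'}(\cdot|y')$ together with $P_{Y'}$. Write the feasible set as $\mathcal{F}(Q)=\{P_{XY'}: P_Y=P_{Y'}\}$. Because $|\mathcal{X}|=|\mathcal{Y}|$ and every $\mathbb{Q}^{(c)}_{y'}$ is invertible (guaranteed by \eqref{eq:full_rank}), we will reparametrize by the conditional output distributions. Set $\mathbbm{q}_{y'}:=P_{Y|Y'}(\cdot|y')=\mathbb{Q}^{(c)}_{y'}P_{X|Y'}(\cdot|y')$. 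Then $I(X;Y|Y'=y')=H(\mathbbm{q}_{y'})-\sum_x P_{X|Y'}(x|y')H(\mathbb{Q}^{(c)}_{y'}(\cdot,x))$. Here the second term is linear in $\mathbbm{q}_{y'}$ via $(\mathbb{Q}^{(c)}_{y'})^{-1}$, and the entropy term is strictly concave in $\mathbbm{q}_{y'}$.

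\textbf{Uniform convergence.} Let $\{Q_k\}$ be channels with $\delta_k\to0$ and with maximizers $P^{(k)}_{XY'}$. The objective is jointly continuous in $(Q,P_{XY'})$, and the constraint set varies continuously. So any limit point $P^\infty$ of $P^{(k)}$ (which exists by compactness) is feasible for $W$. We will show $\liminf_k C^*_f(Q_k)\ge C(W)$ by perturbing $P^{(W)}_XP^{(W)}_Y$ into a feasible point for $Q_k$; this step is handled in the next paragraph. Given that bound, $P^\infty$ attains $C(W)$, and Lemma \ref{lem:limit} forces $P^\infty=P^{(W)}_XP^{(W)}_Y$. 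Uniformity over all $W$-centered $\delta$-approximately memoryless channels then follows by the standard contradiction argument: otherwise there would be a sequence staying at a fixed distance $\epsilon$ from the limit.

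\textbf{Feasible perturbation.} For the perturbation, take $P_{Y'}=\pi$ and $P_{X|Y'}(\cdot|y')=(\mathbb{Q}^{(c)}_{y'})^{-1}\mathbbm{p}$, where $\mathbbm{p}=P^{(W)}_Y$ is fixed. The full-rank assumption and positivity of $P^{(W)}_X$ make this vector a probability vector close to $P^{(W)}_X$ for small $\delta$. Its column sums are $1$, since $\mathbbm{1}^T\mathbb{Q}^{(c)}_{y'}=\mathbbm{1}^T$. Every state then outputs $\mathbbm{p}$, so with $\pi=\mathbbm{p}$ the constraint holds, and the objective tends to $C(W)$.

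\textbf{Uniqueness.} Once maximizers lie near $P^{(W)}_XP^{(W)}_Y$, they satisfy $P_{Y'}>0$ and interior conditionals. In the variables $(P_{Y'}(y'),\,\mathbbm{r}_{y'}:=P_{Y'}(y')\mathbbm{q}_{y'})$, the objective $\sum_{y'}P_{Y'}(y')H(\mathbbm{r}_{y'}/P_{Y'}(y'))$ minus a linear term is a perspective of entropy and hence concave. The constraint $\sum_{y'}\mathbbm{r}_{y'}=(P_{Y'}(y'))_{y'}$ is linear, and the map back to $P_{XY'}$ is a bijection. So the problem is convex, but the perspective is not strictly concave along rays, and this is where the main obstacle lies. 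Two maximizers can differ only along directions where every $\mathbbm{q}_{y'}$ is unchanged while $P_{Y'}$ changes. We then use stationarity: the maximizer's $\mathbbm{q}_{y'}$ must stay near $\mathbbm{p}$. We then need the stationary distribution of the output Markov chain with kernel $\mathbbm{q}_{y'}$ to be unique, which holds because all $\mathbbm{q}_{y'}$ are near the positive vector $\mathbbm{p}$, making the chain irreducible. Together with the linear constraint, this fixes $P_{Y'}$, and concavity then gives uniqueness for $\delta$ small.
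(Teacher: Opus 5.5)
Your proposal is correct and follows essentially the paper's route. Convergence is handled by a compactness/contradiction argument that reduces to Lemma~\ref{lem:limit}. Uniqueness comes from the equality case of concavity: your perspective-of-entropy formulation picks out the same flat directions as the paper's Hessian analysis ($\mathbbm{v}_{y'}=d_{y'}\mathbbm{p}_{y'}$, i.e., fixed conditionals with varying $P_{Y'}$) and its alternative $\Theta$-mixing argument, and is then combined with uniqueness of the stationary distribution of a strictly positive kernel. The one genuine variation is your comparison point $P_{X|Y'}(\cdot|y')=(\mathbb{Q}^{(c)}_{y'})^{-1}\mathbbm{q}^{(W)}$, $P_{Y'}=P^{(W)}_Y$, which makes the constraint hold trivially and avoids the stationary-distribution step the paper needs for $P^{(W)}_XP^{(k)}_{Y'}$.
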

\begin{IEEEproof}
See Appendix \ref{app:uniform_convergence}.
	\end{IEEEproof}

By the positivity part of the simplified surjectivity condition, the capacity-achieving input distribution satisfies $P^{(W)}_X(x)>0$ for all $x\in\mathcal{X}$. The full-rankness part of the simplified surjectivity condition  further guarantees that $\max_{x\in\mathcal{X}}W(y|x)>0$ for all $y\in\mathcal{Y}$, which in turn ensures that the corresponding output distribution satisfies $P^{(W)}_Y(y)>0$ for all $y\in\mathcal{Y}$. Consequently, by Lemma \ref{lem:uniform_convergence}, the marginal distributions 
 $P^*_X$ and $P^*_{Y'}$ induced by the maximizer $P^*_{XY'}$ satisfy $P^*_X(x)>0$ for all $x\in\mathcal{X}$ and $P^*_{Y'}(y)>0$ for $y\in\mathcal{Y}$, provided that $\delta$ is sufficiently small.

%Let $P^*_{XY'}$ be a maximizer of the optimization problem in \eqref{eq:feedback_capacity}--\eqref{eq:feedback_constraint} and let $P_{YY'}$ be the joint distribution of $(Y,Y')$ induced through \eqref{eq:feedback_factorization}.

Construct a stationary Markov process $\{Y_t\}_{t=0}^{\infty}$ such that $P_{Y_0}=P^*_{Y'}$ and $P_{Y_t|Y_{t-1}}=P^*_{Y|Y'}$ for $t=1,2,\ldots$, where $P^*_{XYY'}$ is induced by the maximizer $P^*_{XY'}$ via \eqref{eq:feedback_factorization}.
This process can be viewed as the output sequence of the POST channel $Q$ generated by the optimal feedback strategy that achieves  $C_f(Q)$.
The crux of the proof is to show that there exists an input distribution capable of simulating this Markov process through $Q$ 
 without feedback.
To this end, define
\begin{align}
P_{X^nY^nY_0}(x^n,y^n,y_0):=	P_{X^n|Y_0}(x^n|y_0)P^*_{Y'}(y_0)\left(\prod\limits_{t=1}^nQ(y_t|x_t,y_{t-1})\right).\label{eq:nonfeedback_POST}
	\end{align}
Our goal is to demonstrate that, for every positive integer $n$, one can find a conditional input distribution $P_{X^n|Y_0}$ satisfying
\begin{align}
	P_{Y^nY_0}(y^n,y_0):=\sum\limits_{x^n\in\mathcal{X}^n} P_{X^nY^nY_0}(x^n,y^n,y_0)=P^*_{Y'}(y_0)\prod\limits_{t=1}^nP^*_{Y|Y'}(y_t|y_{t-1}).\label{eq:Markov}
\end{align}
Given such a $P_{X^n|Y_0}$, we obtain
under  $P_{X^nY^nY_0}$
\begin{align}
I(X^n;Y^n|Y_0)&=H(Y^n|Y_0)-H(Y^n|X^n,Y_0)\nonumber\\
&=\sum\limits_{t=1}^nH(Y_t|Y^{t-1}_0)-\sum\limits_{t=1}^nH(Y_t|X^n,Y^{t-1}_0)\nonumber\\
&\stackrel{(a)}{=}\sum\limits_{t=1}^nH(Y_t|Y^{t-1}_0)-\sum\limits_{t=1}^nH(Y_t|X_t,Y_{t-1})\nonumber\\
&\stackrel{(b)}{=}nH(Y|Y')-\sum\limits_{t=1}^nH(Y_t|X_t,Y_{t-1}),\label{eq:sub}
	\end{align}
where ($a$) and ($b$) are due to \eqref{eq:nonfeedback_POST} and \eqref{eq:Markov}, respectively. Since all channel transition matrices $\mathbb{Q}^{(c)}_0,\mathbb{Q}^{(c)}_1,\ldots,\mathbb{Q}^{(c)}_{|\mathcal{Y}|-1}$ have full rank, the fact that 
$P_{Y_t|Y_{t-1}}=P^*_{Y|Y'}$  forces $P_{X_t|Y_{t-1}}$
to match $P^*_{X|Y'}$ induced by the maximizer $P^*_{XY'}$.
Hence,
\begin{align}
	H(Y_t|X_t,Y_{t-1})=H(Y|X,Y')\label{eq:XY'_match}
\end{align}
for $t=1,2,\ldots,n$.
Substituting \eqref{eq:XY'_match} into
\eqref{eq:sub} yields
\begin{align}
I(X^n;Y^n|Y_0)=nI(X;Y|Y')=nC_f(Q).\label{eq:comb1}
\end{align}
On the other hand,
\begin{align}
I(X^n;Y^n|Y_0)\leq nC_n(Q),\label{eq:comb2}
\end{align}
where $C_n(Q)$ is defined in \eqref{eq:C_n(Q)}. 
In view of \eqref{eq:nonfeedback_capacity}, \eqref{eq:comb1}, and \eqref{eq:comb2}, sending $n\rightarrow\infty$ proves $C(Q)\geq C_f(Q)$. Since the converse inequality holds trivially, we conclude that $C(Q)=C_f(Q)$.

It remains to establish the existence of such a $P_{X^n|Y_0}$. At a high level, the strategy used to prove
 $C(Q)=C_f(Q)$ mirrors that of \cite{PAW14}: we aim to simulate, through the POST channel $Q$ without feedback, the Markov process generated by the optimal feedback strategy. The crucial distinction is that the specific POST$(a,b)$ channel studied in \cite{PAW14} admits closed-form expressions for all relevant quantities, reducing the proof to checking some explicit inequalities. In the general setting considered here, such explicit evaluation is no longer feasible. As a result, our proof must take a more conceptual route, relying on structural properties of approximately memoryless surjective POST channels rather than direct calculations.

For $n=1,2,\ldots$, define
\begin{align}
	&Q^{(n)}(y^n|x^n,y_0):=\prod\limits_{t=1}^nQ(y_t|x_t,y_{t-1}),\label{eq:recursion1}\\
	&P^{(n)}(y^n|y_0):=\prod\limits_{t=1}^n P^*_{Y|Y'}(y_t|y_{t-1}).\label{eq:recursion2}
\end{align}
Note that $Q^{(n)}(y^n|x^n,y_0)$ and $P^{(n)}(y^n|y_0)$ admit the following recursion:
\begin{align}
	&Q^{(n)}(y^n|x^n,y_0)=Q(y_1|x_1,y_0)Q^{(n-1)}(y^n_2|x^n_2,y_1),\\
	&P^{(n)}(y^n|y_0)=P^*_{Y|Y'}(y_1|y_0)P^{(n-1)}(y^n_2|y_1).
\end{align}
Construct\footnote{When sequences such as $x^n$ or $y^n$ are used to index the columns or rows of a matrix, they are interpreted according to their base-$|\mathcal{X}|$
or base-$|\mathcal{Y}|$
 representations, respectively. For instance, if 
$|\mathcal{X}|=|\mathcal{Y}|=2$, 
$x^n=(1,0,1)$, and 
$y^n=(1,1,0)$, then 
$\mathbb{Q}(y^n,x^n)$ denotes the 
$(6,5)$-entry of 
$\mathbb{Q}$, i.e., 
$\mathcal{Q}(6,5)$, since these sequences correspond to the base-$2$ indices 
$6$ and 
$5$. As a reminder, matrix rows and columns are indexed starting from $0$.
} matrix $\mathbb{Q}^{(n)}_{y_0}$ and vector $\mathbbm{q}^{(n)}_{y_0}$ such that $\mathbb{Q}^{(n)}_{y_0}(y^n,x^n)=Q^{(n)}(y^n|x^n,y_0)$ and $\mathbbm{q}^{(n)}_{y_0}(y^n)=P^{(n)}(y^n|y_0)$. The problem of finding a conditional input distribution $P_{X^n|Y_0}$ meeting the output constraint \eqref{eq:Markov}
is equivalent to proving the existence of a probability vector $\mathbbm{p}^{(n)}_{y_0}$ satisfying
\begin{align}
	\mathbb{Q}^{(n)}_{y_0}\mathbbm{p}^{(n)}_{y_0}=\mathbbm{q}^{(n)}_{y_0}
\end{align}
for every $y_0\in\mathcal{Y}$.

In view of \eqref{eq:recursion1} and \eqref{eq:recursion2}, we have
\begin{align}
	&\mathbb{Q}^{(n)}_{y_0}=\left(\begin{matrix}
	\mathbb{Q}^{(c)}_{y_0}(0,0)\mathbb{Q}^{(n-1)}_{0} & \mathbb{Q}^{(c)}_{y_0}(0,1)\mathbb{Q}^{(n-1)}_{0} & \cdots & \mathbb{Q}^{(c)}_{y_0}(0,|\mathcal{X}|-1)\mathbb{Q}^{(n-1)}_{0}\\
	\mathbb{Q}^{(c)}_{y_0}(1,0)\mathbb{Q}^{(n-1)}_{1} & \mathbb{Q}^{(c)}_{y_0}(1,1)\mathbb{Q}^{(n-1)}_{1} &\cdots & \mathbb{Q}^{(c)}_{y_0}(1,|\mathcal{X}|-1)\mathbb{Q}^{(n-1)}_{1}\\
	\vdots & \vdots & \ddots & \vdots\\
	\mathbb{Q}^{(c)}_{y_0}(|\mathcal{Y}|-1,0)\mathbb{Q}^{(n-1)}_{|\mathcal{Y}|-1} & \mathbb{Q}^{(c)}_{y_0}(|\mathcal{Y}|-1,1)\mathbb{Q}^{(n-1)}_{|\mathcal{Y}|-1} & \cdots & \mathbb{Q}^{(c)}_{y_0}(|\mathcal{Y}|-1||\mathcal{X}|-1)\mathbb{Q}^{(n-1)}_{|\mathcal{Y}|-1} \end{matrix}\right),\label{eq:recursionmatrix1}\\
	&\mathbbm{q}^{(n)}_{y_0}=\left(\begin{matrix}
		\mathbbm{q}^*_{y_0}(0)\mathbbm{q}^{(n-1)}_{0}\\
		\mathbbm{q}^*_{y_0}(1)\mathbbm{q}^{(n-1)}_{1}\\
		\vdots\\
		\mathbbm{q}^*_{y_0}(|\mathcal{Y}|-1)\mathbbm{q}^{(n-1)}_{|\mathcal{Y}|-1}
		\end{matrix}\label{eq:recursionmatrix2}
	\right),
\end{align}
where $\mathbbm{q}^*_{y'}:=(P^*_{Y|Y'}(y|y'))^T_{y\in\mathcal{Y}}$ for $y'\in\mathcal{Y}$.
Recall that all channel transition matrices $\mathbbm{Q}^{(c)}_{0},\mathbbm{Q}^{(c)}_{1},\ldots,\mathbbm{Q}^{(c)}_{|\mathcal{Y}|-1}$ have full rank whenever \eqref{eq:full_rank} holds. Let
\begin{align} \mathbb{G}_{y'}:=(\mathbb{Q}^{(c)}_{y'})^{-1}\label{eq:inverse_base}
	\end{align}
	 for $y'\in\mathcal{Y}$. Recursively define
\begin{align}
&\mathbb{G}^{(n)}_{y_0}:=\left(\begin{matrix}
	\mathbb{G}_{y_0}(0,0)\mathbb{G}^{(n-1)}_{0} & \mathbb{G}_{y_0}(0,1)\mathbb{G}^{(n-1)}_{1} & \cdots & \mathbb{G}_{y_0}(0,|\mathcal{Y}|-1)\mathbb{G}^{(n-1)}_{|\mathcal{Y}|-1}\\
	\mathbb{G}_{y_0}(1,0)\mathbb{G}^{(n-1)}_{0} & \mathbb{G}_{y_0}(1,1)\mathbb{G}^{(n-1)}_{1} &\cdots & \mathbb{G}_{y_0}(1,|\mathcal{Y}|-1)\mathbb{G}^{(n-1)}_{|\mathcal{Y}|-1}\\
	\vdots & \vdots & \ddots & \vdots\\
	\mathbb{G}_{y_0}(|\mathcal{X}|-1,0)\mathbb{G}^{(n-1)}_{0} & \mathbb{G}_{y_0}(|\mathcal{X}|-1,1)\mathbb{G}^{(n-1)}_{1} & \cdots & \mathbb{G}_{y_0}(|\mathcal{X}|-1|,|\mathcal{Y}|-1)\mathbb{G}^{(n-1)}_{|\mathcal{Y}|-1} \end{matrix}\right)\label{eq:recursionmatrix3}
\end{align}
with $\mathbb{G}^{(1)}_{y_0}:=\mathbb{G}_{y_0}$ for $y_0\in\mathcal{Y}$. We shall show via induction that
\begin{align}
	\mathbb{G}^{(n)}_{y_0}=(\mathbb{Q}^{(n)}_{y_0})^{-1}.\label{eq:inverse}
\end{align}
Clearly, \eqref{eq:inverse} holds when $n=1$ since
\begin{align}
\mathbb{G}^{(1)}_{y_0}=\mathbb{G}_{y_0}=(\mathbb{Q}^{(c)}_{y_0})^{-1}=(\mathbb{Q}^{(1)}_{y_0})^{-1}.
\end{align}
Assume $\mathbb{G}^{(n-1)}_{y_0}=(\mathbb{Q}^{(n-1)}_{y_0})^{-1}$. It can be verified by leveraging \eqref{eq:recursionmatrix1}, \eqref{eq:inverse_base}, and \eqref{eq:recursionmatrix3}
that
\begin{align}
	&\mathbb{G}^{(n)}_{y_0}\mathbb{Q}^{(n)}_{y_0}=\left(\begin{matrix}
		(\mathbb{G}_{y_0}\mathbb{Q}^{(c)}_{y_0})(0,0)\mathbb{I} & (\mathbb{G}_{y_0}\mathbb{Q}^{(c)}_{y_0})(0,1)\mathbb{I} & \cdots & (\mathbb{G}_{y_0}\mathbb{Q}^{(c)}_{y_0})(0,|\mathcal{X}|-1)\mathbb{I}\\
		(\mathbb{G}_{y_0}\mathbb{Q}^{(c)}_{y_0})(1,0)\mathbb{I} & (\mathbb{G}_{y_0}\mathbb{Q}^{(c)}_{y_0})(1,1)\mathbb{I} &\cdots & (\mathbb{G}_{y_0}\mathbb{Q}^{(c)}_{y_0})(1,|\mathcal{X}|-1)\mathbb{I}\\
		\vdots & \vdots & \ddots & \vdots\\
		(\mathbb{G}_{y_0}\mathbb{Q}^{(c)}_{y_0})(|\mathcal{X}|-1,0)\mathbb{I} & (\mathbb{G}_{y_0}\mathbb{Q}^{(c)}_{y_0})(|\mathcal{X}|-1,1)\mathbb{I} & \cdots & (\mathbb{G}_{y_0}\mathbb{Q}^{(c)}_{y_0})(|\mathcal{X}|-1|,|\mathcal{X}|-1)\mathbb{I} \end{matrix}\right)=\mathbb{I},\\
			&\mathbb{Q}^{(n)}_{y_0}\mathbb{G}^{(n)}_{y_0}=\left(\begin{matrix}
			(\mathbb{Q}^{(c)}_{y_0}\mathbb{G}_{y_0})(0,0)\mathbb{I} & (\mathbb{Q}^{(c)}_{y_0}\mathbb{G}_{y_0})(0,1)\mathbb{I} & \cdots & (\mathbb{Q}^{(c)}_{y_0}\mathbb{G}_{y_0})(0,|\mathcal{Y}|-1)\mathbb{I}\\
			(\mathbb{Q}^{(c)}_{y_0}\mathbb{G}_{y_0})(1,0)\mathbb{I} & (\mathbb{Q}^{(c)}_{y_0}\mathbb{G}_{y_0})(1,1)\mathbb{I} &\cdots & (\mathbb{Q}^{(c)}_{y_0}\mathbb{G}_{y_0})(1,|\mathcal{Y}|-1)\mathbb{I}\\
			\vdots & \vdots & \ddots & \vdots\\
			(\mathbb{Q}^{(c)}_{y_0}\mathbb{G}_{y_0})(|\mathcal{Y}|-1,0)\mathbb{I} & (\mathbb{Q}^{(c)}_{y_0}\mathbb{G}_{y_0})(|\mathcal{Y}|-1,1)\mathbb{I} & \cdots & (\mathbb{Q}^{(c)}_{y_0}\mathbb{G}_{y_0})(|\mathcal{Y}|-1|,|\mathcal{Y}|-1)\mathbb{I} \end{matrix}\right)=\mathbb{I},
\end{align}
thereby establishing \eqref{eq:inverse}. Now the problem boils down to proving 
\begin{align}
	\mathbbm{p}^{(n)}_{y_0}:=\mathbb{G}^{(n)}_{y_0}\mathbbm{q}^{(n)}_{y_0}\label{eq:inputvector}
\end{align}
is a valid probability vector.

Note that
\begin{align}
	\mathbbm{1}^T\mathbbm{p}^{(n)}_{y_0}&=\mathbbm{1}^T\mathbb{G}^{(n)}_{y_0}\mathbbm{q}^{(n)}_{y_0}\nonumber\\
	&\stackrel{(a)}{=}\mathbbm{1}^T\mathbb{Q}^{(n)}_{y_0}\mathbb{G}^{(n)}_{y_0}\mathbbm{q}^{(n)}_{y_0}\nonumber\\
	&\stackrel{(b)}{=}\mathbbm{1}^T\mathbbm{q}^{(n)}_{y_0}\nonumber\\
	&=1,\label{eq:normalization}
\end{align}
where ($a$) and ($b$) are due to $\mathbbm{1}^T\mathbb{Q}^{(n)}_{y_0}=\mathbbm{1}$ and \eqref{eq:inverse}, respectively. It remains to ensure that
\begin{align}
	\mathbbm{p}^{(n)}_{y_0}(x^n)\geq 0\label{eq:convex_hull}
\end{align}
for all $x^n\in\mathcal{X}^n$. 
As certain steps in our argument require $\mathbbm{p}^{(n)}_{y_0}(x^n)$ to appear in the denominator, to facilitate the induction, we therefore impose the stronger condition that 
\begin{align} \mathbbm{p}^{(n)}_{y_0}(x^n)>0.\label{eq:positivity}
\end{align}
 By Lemma \ref{lem:uniform_convergence},
$P^*_{XY}$ converges uniformly to $P^{(W)}_XP^{(W)}_Y$, which in turn implies
\begin{align}
	&P^*_{X|Y'}(x|y')\rightarrow P^{(W)}_X(x),\quad x\in\mathcal{X}, y'\in\mathcal{Y},\label{eq:convergence_X}\\
	&P^*_{Y|Y'}(y|y')\rightarrow P^{(W)}_Y(y),\quad y,y'\in\mathcal{Y},\label{eq:convergence_Y}
\end{align}
as $\delta\rightarrow0$. Note that
\begin{align}
	\mathbbm{p}^{(1)}_{y_0}=\mathbb{G}_{y_0}\mathbbm{q}^*_{y_0}=(\mathbb{Q}^{(c)}_{y_0})^{-1}\mathbbm{q}^*_{y_0}=\mathbbm{p}^*_{y_0}\rightarrow\mathbbm{p}^{(W)}\label{eq:input_convergence}
\end{align}
as $\delta\rightarrow 0$, where $\mathbbm{p}^*_{y'}:=(P^*_{X|Y'}(x|y'))^T_{x\in\mathcal{X}}$
for $y'\in\mathcal{Y}$ and $\mathbbm{p}^{(W)}:=(p^{(W)}_X(x))^T_{x\in\mathcal{X}}$. Therefore, \eqref{eq:positivity} holds when $n=1$ for all sufficiently small $\delta$. Assume
\begin{align}
	\mathbbm{p}^{(n-1)}_{y_0}(x^{n-1})>0\label{eq:com1}
\end{align}
for all $x^{n-1}\in\mathcal{X}^{n-1}$.
It follows by \eqref{eq:recursionmatrix2}, \eqref{eq:recursionmatrix3}, and \eqref{eq:inputvector} that
\begin{align}
	\mathbbm{p}^{(n)}_{y_0}(x^n)=\sum\limits_{y_1\in\mathcal{Y}}\mathbb{G}_{y_0}(x_1,y_1)\mathbbm{q}^*_{y_0}(y_1)\mathbbm{p}^{(n-1)}_{y_1}(x^n_2).\label{eq:expansion2}
\end{align}
Therefore, we have
\begin{align}
	\frac{\mathbbm{p}^{(n)}_{y_0}(x^n)}{\mathbbm{p}^{(n-1)}_{y_0}(x^{n}_2)}&=\sum\limits_{y_1\in\mathcal{Y}}\mathbb{G}_{y_0}(x_1,y_1)\mathbbm{q}^*_{y_0}(y_1)\frac{\mathbbm{p}^{(n-1)}_{y_1}(x^n_2)}{\mathbbm{p}^{(n-1)}_{y_0}(x^{n}_2)}\nonumber\\
	&=\sum\limits_{y_1\in\mathcal{Y}}\mathbb{G}_{y_0}(x_1,y_1)\mathbbm{q}^*_{y_0}(y_1)+\sum\limits_{y_1\in\mathcal{Y}}\mathbb{G}_{y_0}(x_1,y_1)\mathbbm{q}^*_{y_0}(y_1)\frac{\mathbbm{p}^{(n-1)}_{y_1}(x^n_2)-\mathbbm{p}^{(n-1)}_{y_0}(x^{n}_2)}{\mathbbm{p}^{(n-1)}_{y_0}(x^{n}_2)}\nonumber\\
	&\stackrel{(c)}{=}\mathbbm{p}^*_{y_0}(x_1)+\sum\limits_{y_1\in\mathcal{Y}}\mathbb{G}_{y_0}(x_1,y_1)\mathbbm{q}^*_{y_0}(y_1)\frac{\mathbbm{p}^{(n-1)}_{y_1}(x^n_2)-\mathbbm{p}^{(n-1)}_{y_0}(x^{n}_2)}{\mathbbm{p}^{(n-1)}_{y_0}(x^{n}_2)}\nonumber\\
	&\geq\mathbbm{p}^*_{y_0}(x_1)-\sum\limits_{y_1\in\mathcal{Y}}\left|\mathbb{G}_{y_0}(x_1,y_1)\right|\mathbbm{q}^*_{y_0}(y_1)\left|\frac{\mathbbm{p}^{(n-1)}_{y_1}(x^n_2)-\mathbbm{p}^{(n-1)}_{y_0}(x^{n}_2)}{\mathbbm{p}^{(n-1)}_{y_0}(x^{n}_2)}\right|,\label{eq:com2}
\end{align}
where ($c$) is due to \eqref{eq:input_convergence}. 
By \eqref{eq:convergence_X}, \eqref{eq:convergence_Y}, and the fact that  $\mathbb{G}_{y_0}\rightarrow\mathbb{W}^{-1}$ as $\delta\rightarrow 0$,
\begin{align}
	\lim\limits_{\delta\rightarrow0}\frac{\mathbbm{p}^*_{y_0}(x_1)}{\sum_{y_1\in\mathcal{Y}}\left|\mathbb{G}_{y_0}(x_1,y_1)\right|\mathbbm{q}^*_{y_0}(y_1)}=\frac{\mathbbm{p}^{(W)}(x_1)}{\sum_{y\in\mathcal{Y}}|\mathbb{W}^{-1}(x_1,y_1)|\mathbbm{q}^{(W)}(y_1)}.
\end{align}
Therefore, if there exists $\kappa$ satisfying
\begin{align}
	0<\kappa<\min\limits_{x\in\mathcal{X}}\frac{\mathbbm{p}^{(W)}(x)}{\sum_{y\in\mathcal{Y}}|\mathbb{W}^{-1}(x,y)|\mathbbm{q}^{(W)}(y)}\label{eq:kappa}
\end{align}
such that
\begin{align}
	\left|\frac{\mathbbm{p}^{(n)}_{y'_0}(x^n)-\mathbbm{p}^{(n)}_{y_0}(x^n)}{\mathbbm{p}^{(n)}_{y_0}(x^n)}\right|\leq\kappa\label{eq:no_sign_change}
\end{align}
for all $n$, $x^n\in\mathcal{X}^n$, and $y_0, y'_0\in\mathcal{Y}$, then we can find a small enough $\delta$ ensuring
\begin{align}
	\frac{\mathbbm{p}^*_{y_0}(x_1)}{\sum_{y_1\in\mathcal{Y}}\left|\mathbb{G}_{y_0}(x_1,y_1)\right|\mathbbm{q}^*_{y_0}(y_1)}>\kappa
\end{align}
and consequently
\begin{align}
	\mathbbm{p}^*_{y_0}(x_1)-\sum\limits_{y_1\in\mathcal{Y}}\left|\mathbb{G}_{y_0}(x_1,y_1)\right|\mathbbm{q}^*_{y_0}(y_1)\left|\frac{\mathbbm{p}^{(n-1)}_{y_1}(x^n_2)-\mathbbm{p}^{(n-1)}_{y_0}(x^{n}_2)}{\mathbbm{p}^{(n-1)}_{y_0}(x^{n}_2)}\right|>0.\label{eq:com3}
\end{align}
Combining \eqref{eq:com2} and \eqref{eq:com3} with the induction hypothesis \eqref{eq:com1} proves \eqref{eq:positivity}.

%when $\delta$ is sufficiently small, there exists $\kappa\in(0,1)$ such  As a consequence, the sign of $\mathbbm{p}^{(n)}_{y_0}(x^n)$ does not depend on $x^n$, which, together with \eqref{eq:normalization}, yields \eqref{eq:positivity}.

Finally, we shall show via induction
that for any $\kappa$ satisfying \eqref{eq:kappa}, when $\delta$ is sufficiently small, \eqref{eq:no_sign_change} holds for all $n$, $x^n\in\mathcal{X}^n$, and $y_0, y'_0\in\mathcal{Y}$. In light of \eqref{eq:input_convergence}, this is true when $n=1$.
Assume
\begin{align}
\left|\frac{\mathbbm{p}^{(n-1)}_{y'_0}(x^{n-1})-\mathbbm{p}^{(n-1)}_{y_0}(x^{n-1})}{\mathbbm{p}^{(n-1)}_{y_0}(x^{n-1})}\right|\leq\kappa.\label{eq:induction_hypothesis}
\end{align}
By \eqref{eq:expansion2}, we have
\begin{align}
	\left|\frac{\mathbbm{p}^{(n)}_{y'_0}(x^n)-\mathbbm{p}^{(n)}_{y_0}(x^n)}{\mathbbm{p}^{(n)}_{y_0}(x^n)}\right|&=\left|\frac{\sum_{y_1\in\mathcal{Y}}\mathbb{G}_{y'_0}(x_1,y_1)\mathbbm{q}^*_{y'_0}(y_1)\mathbbm{p}^{(n-1)}_{y_1}(x^n_2)-\sum\limits_{y_1\in\mathcal{Y}}\mathbb{G}_{y_0}(x_1,y_1)\mathbbm{q}^*_{y_0}(y_1)\mathbbm{p}^{(n-1)}_{y_1}(x^n_2)}{\sum_{y_1\in\mathcal{Y}}\mathbb{G}_{y_0}(x_1,y_1)\mathbbm{q}^*_{y_0}(y_1)\mathbbm{p}^{(n-1)}_{y_1}(x^n_2)}\right|\nonumber\\
	&\leq\frac{\sum_{y_1\in\mathcal{Y}}\left|(\mathbb{G}_{y'_0}(x_1,y_1)\mathbbm{q}^*_{y'_0}(y_1)-\mathbb{G}_{y_0}(x_1,y_1)\mathbbm{q}^*_{y_0}(y_1))\mathbbm{p}^{(n-1)}_{y_1}(x^n_2)\right|}{\left|\sum_{y_1\in\mathcal{Y}}\mathbb{G}_{y_0}(x_1,y_1)\mathbbm{q}^*_{y_0}(y_1)\mathbbm{p}^{(n-1)}_{y_1}(x^n_2)\right|}\nonumber\\
	&=\frac{\sum_{y_1\in\mathcal{Y}}\left|(\mathbb{G}_{y'_0}(x_1,y_1)\mathbbm{q}^*_{y'_0}(y_1)-\mathbb{G}_{y_0}(x_1,y_1)\mathbbm{q}^*_{y_0}(y_1))\mathbbm{\tau}^{(n-1)}_{y_1}(x^n_2)\right|}{\left|\sum_{y_1\in\mathcal{Y}}\mathbb{G}_{y_0}(x_1,y_1)\mathbbm{q}^*_{y_0}(y_1)\mathbbm{\tau}^{(n-1)}_{y_1}(x^n_2)\right|},\label{eq:deviation}
\end{align}
where
\begin{align}
	\mathbbm{\tau}^{(n-1)}_{y_1}(x^n_2):=\frac{\mathbbm{p}^{(n-1)}_{y_1}(x^n_2)}{\max_{y'_1\in\mathcal{Y}}\mathbbm{p}^{(n-1)}_{y'_1}(x^n_2)}.
\end{align}
In view of  the induction hypothesis \eqref{eq:induction_hypothesis},
\begin{align}
\sum_{y_1\in\mathcal{Y}}\left|(\mathbb{G}_{y'_0}(x_1,y_1)\mathbbm{q}^*_{y'_0}(y_1)-\mathbb{G}_{y_0}(x_1,y_1)\mathbbm{q}^*_{y_0}(y_1))\mathbbm{\tau}^{(n-1)}_{y_1}(x^n_2)\right|&\leq(1+\kappa)\sum_{y_1\in\mathcal{Y}}\left|(\mathbb{G}_{y'_0}(x_1,y_1)\mathbbm{q}^*_{y'_0}(y_1)-\mathbb{G}_{y_0}(x_1,y_1)\mathbbm{q}^*_{y_0}(y_1))\right|.\label{eq:numerator}
\end{align}
Moreover,
\begin{align}
	\sum_{y_1\in\mathcal{Y}}\mathbb{G}_{y_0}(x_1,y_1)\mathbbm{q}^*_{y_0}(y_1)\mathbbm{\tau}^{(n-1)}_{y_1}(x^n_2)&=	\sum_{y_1\in\mathcal{Y}}\mathbb{G}_{y_0}(x_1,y_1)\mathbbm{q}^*_{y_0}(y_1)+\sum_{y_1\in\mathcal{Y}}\mathbb{G}_{y_0}(x_1,y_1)\mathbbm{q}^*_{y_0}(y_1)(\mathbbm{\tau}^{(n-1)}_{y_1}(x^n_2)-1)\nonumber\\
	&\stackrel{(d)}{=}\mathbbm{p}^*_{y_0}(x_1)+\sum_{y_1\in\mathcal{Y}}\mathbb{G}_{y_0}(x_1,y_1)\mathbbm{q}^*_{y_0}(y_1)(\mathbbm{\tau}^{(n-1)}_{y_1}(x^n_2)-1)\nonumber\\
	&\stackrel{(e)}{\geq}\mathbbm{p}^*_{y_0}(x_1)-\kappa\sum_{y_1\in\mathcal{Y}}\left|\mathbb{G}_{y_0}(x_1,y_1)\right|\mathbbm{q}^*_{y_0}(y_1),\label{eq:denominator}
\end{align}
where ($d$) and ($e$) are due to \eqref{eq:input_convergence} and \eqref{eq:induction_hypothesis}, respectively.
By \eqref{eq:convergence_X}, \eqref{eq:convergence_Y}, and the fact that  $\mathbb{G}_{y_0}\rightarrow\mathbb{W}^{-1}$ as $\delta\rightarrow 0$,
\begin{align}
&\lim\limits_{\delta\rightarrow 0}\sum_{y_1\in\mathcal{Y}}\left|(\mathbb{G}_{y'_0}(x_1,y_1)\mathbbm{q}^*_{y'_0}(y_1)-\mathbb{G}_{y_0}(x_1,y_1)\mathbbm{q}^*_{y_0}(y_1))\right|=0,\\
&\lim\limits_{\delta\rightarrow 0}\mathbbm{p}^*_{y_0}(x_1)-\kappa\sum_{y_1\in\mathcal{Y}}\left|\mathbb{G}_{y_0}(x_1,y_1)\right|\mathbbm{q}^*_{y_0}(y_1)=\mathbbm{p}^{(W)}(x_1)-\kappa\sum\limits_{y_1\in\mathcal{Y}}\left|\mathbb{W}^{-1}(x_1,y_1)\right|\mathbbm{q}^{(W)}(y_1).
\end{align}
Hence, given $\kappa$ satisfying \eqref{eq:kappa},
one can find a small enough $\delta$ such that
\begin{align}
&\mathbbm{p}^*_{y_0}(x_1)-\kappa\sum_{y_1\in\mathcal{Y}}\left|\mathbb{G}_{y_0}(x_1,y_1)\right|\mathbbm{q}^*_{y_0}(y_1)> 0,\label{eq:cond1}\\
&\frac{(1+\kappa)\sum_{y_1\in\mathcal{Y}}\left|(\mathbb{G}_{y'_0}(x_1,y_1)\mathbbm{q}^*_{y'_0}(y_1)-\mathbb{G}_{y_0}(x_1,y_1)\mathbbm{q}^*_{y_0}(y_1))\right|}{\mathbbm{p}^*_{y_0}(x_1)-\kappa\sum_{y_1\in\mathcal{Y}}\left|\mathbb{G}_{y_0}(x_1,y_1)\right|\mathbbm{q}^*_{y_0}(y_1)}\leq\kappa.\label{eq:cond2}
\end{align}
Combining \eqref{eq:deviation}, \eqref{eq:numerator}, \eqref{eq:denominator}, \eqref{eq:cond1}, and \eqref{eq:cond2} establishes \eqref{eq:no_sign_change}.

\subsection{$|\mathcal{X}|>|\mathcal{Y}|$}\label{sec:proof2}

To address the case $|\mathcal{X}|>|\mathcal{Y}|$, a refined version of Lemma \ref{lem:uniform_convergence} is required.

\begin{lemma}\label{lem:refined_version}
	When $\delta$ is sufficiently small, for any $W$-centered $\delta$-approximately memoryless POST channel $Q$ with respect to a given memoryless surjective channel $W$, the maximizer $P^*_{XY'}$ of the  optimization problem in \eqref{eq:feedback_capacity}--\eqref{eq:feedback_constraint} is unique, and the induced marginal $P^*_X$ is supported on  $\mathcal{S}$; moreover, $P^*_{XY'}$ converges uniformly to $P^{(W)}_XP^{(W)}_Y$ over all such POST channels as $\delta\rightarrow 0$.
	\end{lemma}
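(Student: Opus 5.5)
We prove the three claims in the order convergence, support, uniqueness. The convergence step is a compactness argument that avoids uniqueness entirely. The support claim then follows from a perturbation of the strict complementary slackness in \eqref{eq:slackness3}. Uniqueness follows by reducing to the equal-alphabet setting of Lemma \ref{lem:uniform_convergence}.

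\emph{Step 1 (convergence of every maximizer).} Let $Q_k$ be $W$-centered $\delta_k$-approximately memoryless with $\delta_k\rightarrow0$, and let $P^{(k)}_{XY'}$ be any maximizer of \eqref{eq:feedback_capacity}--\eqref{eq:feedback_constraint} for $Q_k$. The feasible set is compact, and the objective $I(X;Y|Y')$ is jointly continuous in $(P_{XY'},Q)$. Hence every limit point of $P^{(k)}_{XY'}$ is feasible for the limiting problem with $Q=W$. It is also optimal there: $C^*_f(Q_k)\geq I(X;Y|Y')$ evaluated at $P^{(W)}_XP^{(W)}_Y$ under $Q_k$, and the right-hand side tends to $C(W)$. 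By Lemma \ref{lem:limit}, the only such limit point is $P^{(W)}_XP^{(W)}_Y$. Arguing by contradiction over sequences gives uniform convergence of all maximizers, whether or not they are unique.

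\emph{Step 2 (support on $\mathcal{S}$).} Write $P^*_{XY'}=P^*_{Y'}P^*_{X|Y'}$. Conditioning on $Y'=y'$ with $P^*_{Y'}(y')>0$, the KKT conditions of the concave program have a Lagrange multiplier $\lambda$ for the linear constraint $P_Y=P_{Y'}$. They give, for each $x$,
\begin{align}
D\big(Q(\cdot|x,y')\,\big\|\,P^*_{Y|Y'=y'}\big)+\sum_{y}Q(y|x,y')\lambda(y)\leq \nu(y'),
\end{align}
with equality whenever $P^*_{X|Y'}(x|y')>0$. Here $\nu(y')$ is a normalizing constant, and $\lambda$ is defined up to an additive constant.

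Along the sequence of Step 1, $P^*_{Y|Y'=y'}\rightarrow P^{(W)}_Y$. The equalities for $x\in\mathcal{S}$ hold eventually, because $P^*_X(x)\rightarrow P^{(W)}_X(x)>0$ and $P^*_{Y'}\rightarrow P^{(W)}_Y>0$. These equalities form a linear system for $\lambda-\nu$ whose matrix is the columns of $\mathbb{Q}^{(c)}_{y'}$ indexed by $\mathcal{S}$. By full rankness this matrix is eventually invertible, so the unknowns (after fixing the normalization) are uniquely determined by continuous functions of the data. In the limit the equalities read $D(W(\cdot|x)\|P^{(W)}_Y)=C(W)$ on $\mathcal{S}$, which forces $\lambda$ to tend to a constant.

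Now take $x\notin\mathcal{S}$. Its KKT expression tends to $D(W(\cdot|x)\|P^{(W)}_Y)$, which is strictly less than $C(W)=\lim\nu(y')$ by \eqref{eq:slackness3}. Therefore the inequality for $x$ is strict for small $\delta$, and complementary slackness forces $P^*_{X|Y'}(x|y')=0$. Uniformity in $Q$ follows from the same contradiction argument.

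\emph{Step 3 (uniqueness).} Once every maximizer is supported on $\mathcal{S}$, the problem is equivalent to the feedback-capacity problem for the POST channel restricted to inputs in $\mathcal{S}$. That channel is $W_{\mathcal{S}}$-centered and $\delta$-approximately memoryless, where $W_{\mathcal{S}}$ is the column submatrix indexed by $\mathcal{S}$. $W_{\mathcal{S}}$ is square and full rank. Its capacity-achieving distribution is the restriction of $P^{(W)}_X$, which is positive, since $C(W_{\mathcal{S}})\leq C(W)$ is attained by that restriction. So $W_{\mathcal{S}}$ is a surjective channel with $|\mathcal{S}|=|\mathcal{Y}|$, and Lemma \ref{lem:uniform_convergence} gives uniqueness.

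The main obstacle is Step 2: justifying the KKT multipliers and their convergence. The multiplier $\lambda$ must be shown to stay bounded and to converge to a constant. I would handle this through the invertibility of the $\mathcal{S}$-columns, after normalizing $\lambda$ so that $\sum_y P^{(W)}_Y(y)\lambda(y)=0$.
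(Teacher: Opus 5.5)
Your three-step structure works, and Step~3 is the paper's own reduction to Lemma~\ref{lem:uniform_convergence} applied to $W_{\mathcal{S}}$. There is, however, a slip in Step~1. The distribution $P^{(W)}_XP^{(W)}_Y$ is in general \emph{not} feasible for $Q_k$: through $Q_k$ it induces $P_Y(y)=\sum_{y'}P^{(W)}_Y(y')\sum_{x}P^{(W)}_X(x)Q_k(y|x,y')$, which need not equal $P^{(W)}_Y(y)$. So the bound $C^*_f(Q_k)\geq I(X;Y|Y')|_{P_{XY'}=P^{(W)}_XP^{(W)}_Y,\,Q=Q_k}$ is not justified by the definition of $C^*_f(Q_k)$ as a constrained maximum.

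The repair is the one in the paper's proof of Lemma~\ref{lem:uniform_convergence}: use $P^{(W)}_X\pi_k$, where $\pi_k$ is a stationary distribution of the kernel $K_k(y|y'):=\sum_xP^{(W)}_X(x)Q_k(y|x,y')$. This satisfies \eqref{eq:feedback_constraint} by construction. Its objective is $\sum_{y'}\pi_k(y')I(X;Y|Y'=y')$, where each term is the mutual information of input $P^{(W)}_X$ through $Q_k(\cdot|\cdot,y')$, so it tends to $C(W)$ whatever $\pi_k$ is. Existence of $\pi_k$ is all you need. A smaller point: the positivity of $P^*_{XY'}(x,y')$ on $\mathcal{S}\times\mathcal{Y}$, which activates your KKT equalities, should be drawn from the joint convergence of Step~1, not from convergence of the two marginals.

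With this fixed, Step~2 is a valid and genuinely different route to the support claim. The paper stays on the primal side. Assuming mass outside $\mathcal{S}$, it constructs the competitor $\mathbbm{p}'_{y'}(\mathcal{S})=(\mathbb{Q}^{(c)}_{y'}(\mathcal{S}))^{-1}\mathbbm{q}^*_{y'}$, which preserves $P^*_{Y|Y'}$ and hence feasibility. It then uses that, with $P_{Y|Y'}$ frozen, $\sum_xP_{X|Y'}(x|y')D(Q(\cdot|x,y')\|P^*_{Y|Y'}(\cdot|y'))$ is linear in $P_{X|Y'}$. Bounds through $\underline{C}_{y'}(\mathcal{S})$, $\overline{C}_{y'}(\mathcal{S})$, $\overline{C}_{y'}(\mathcal{X}\backslash\mathcal{S})$ and $\|(\mathbb{Q}^{(c)}_{y'}(\mathcal{S}))^{-1}\|_{\infty}$ then give a strict gain of at least $\frac{1}{2}(C(W)-C'(W))\sum_{x\notin\mathcal{S}}P^*_X(x)$.

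You use the same two ingredients, invertibility of $\mathbb{W}(\mathcal{S})$ and \eqref{eq:slackness3}, but on the dual side. Your $\mathcal{S}$-equalities read $(\mathbb{Q}^{(c)}_{y'}(\mathcal{S}))^T(\lambda-\nu(y')\mathbbm{1})=-\bigl(D(Q(\cdot|x,y')\|P^*_{Y|Y'}(\cdot|y'))\bigr)_{x\in\mathcal{S}}$. Since $(\mathbb{W}(\mathcal{S})^T)^{-1}\mathbbm{1}=\mathbbm{1}$, the solution tends to $-C(W)\mathbbm{1}$, which is exactly what makes the inequality for $x\notin\mathcal{S}$ strict. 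This is shorter, needs no check that a competitor is a probability vector, and yields convergence of the multipliers as a by-product.

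Its cost is the existence of multipliers, which you leave as an obstacle but which is routine. The constraints are affine, so no Slater-type qualification is needed. The objective is differentiable at $P^*$, with $\partial I(X;Y|Y')/\partial P_{XY'}(x,y')=D(Q(\cdot|x,y')\|P^*_{Y|Y'}(\cdot|y'))$, finite because $P^*_{YY'}>0$ for small $\delta$. That derivative is precisely the divergence term in your KKT inequality. The paper's primal argument, by contrast, is fully elementary and avoids duality altogether.
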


\begin{IEEEproof}
	Using essentially the same argument as in the proof of Lemma \ref{lem:uniform_convergence}, one can show that $P^*_{XY'}$ converges uniformly to $P^{(W)}_XP^{(W)}_Y$ as $\delta\rightarrow 0$. 
	It remains to establish that, for sufficiently small
	$\delta$, the induced marginal $P^*_X$ is supported on $\mathcal{S}$, which in turn guarantees the uniqueness of  $P^*_{XY'}$ by Lemma \ref{lem:uniform_convergence}.

The uniform convergence of $P^*_{XY}$ to $P^{(W)}_XP^{(W)}_Y$ implies that the induced conditional distributions satisfy
\begin{align}
	&P^*_{X|Y'}(x|y')\rightarrow P^{(W)}_X(x),\quad x\in\mathcal{X}, y'\in\mathcal{Y},\\
	&P^*_{Y|Y'}(y|y')\rightarrow P^{(W)}_Y(y),\quad y,y'\in\mathcal{Y},
\end{align}
as $\delta\rightarrow0$. Now suppose, to the contrary, that $\sum_{x\in\mathcal{X}\backslash\mathcal{S}}P^*_X(x)>0$. 
For any $\mathcal{A}\subseteq\mathcal{X}$,
let $\mathbb{W}(\mathcal{A})$ denote the submatrix of $\mathbb{W}$ consisting of the columns indexed by $\mathcal{A}$, and define 
$\mathbb{Q}^{(c)}_{y'}(\mathcal{A})$ analogously for $y'\in\mathcal{Y}$. Note that the matrices $\mathbb{Q}^{(c)}_{0}(\mathcal{S}),\mathbb{Q}^{(c)}_{1}(\mathcal{S}),\ldots,\mathbb{Q}^{(c)}_{|\mathcal{Y}|-1}(\mathcal{S})$ have full rank whenever
\begin{align}
	\delta<\frac{\sigma_{\min}(\mathbb{W}(\mathcal{S}))}{|\mathcal{S}|}.
	\end{align}
%Without loss of generality, we assume $\mathcal{S}=\{0,1,\ldots,|\mathcal{S}|-1\}$. 
Construct a new conditional distribution $P'_{X|Y'}$ such that
\begin{align}
&\mathbbm{p}'_{y'}(\mathcal{S})=(\mathbb{Q}^{(c)}_{y'}(\mathcal{S}))^{-1}\mathbbm{q}^*_{y'},\\
&P'_{X|Y'}(x|y')=0,\quad x\in\mathcal{X}\backslash\mathcal{S},
\end{align}
where $\mathbbm{p}'_{y'}(\mathcal{S}):=(P'_{X|Y'}(x|y'))^T_{x\in\mathcal{S}}$ and 
$\mathbbm{q}^*_{y'}:=(P^*_{Y|Y'}(y|y'))_{y\in\mathcal{Y}}^T$ for $y'\in\mathcal{Y}$.
We have
\begin{align}
	\mathbbm{1}^T\mathbbm{p}'_{y'}(\mathcal{S})&=\mathbbm{1}^T\mathbb{Q}^{(c)}_{y'}(\mathcal{S})\mathbbm{p}'_{y'}(\mathcal{S})\nonumber\\
	&=\mathbbm{1}^T\mathbbm{q}^*_{y'}\nonumber\\
	&=1
\end{align}
and
\begin{align}
	\lim\limits_{\delta\rightarrow 0}\mathbbm{p}'_{y'}(\mathcal{S})&= (\mathbb{W}(\mathcal{S}))^{-1}\mathbbm{q}^{(W)}\nonumber\\
	&=\mathbbm{p}^{(W)}(\mathcal{S}),
\end{align}
where $\mathbbm{q}^{(W)}:=(P^{(W)}_Y(y))_{y\in\mathcal{Y}}^T$ and $\mathbbm{p}^{(W)}(\mathcal{S}):=(P^{(W)}_X(x))_{x\in\mathcal{S}}^T$. Therefore, when $\delta$ is sufficiently small, the constructed $P'_{X|Y'}$ is indeed a valid conditional distribution.
Note that $P'_{X|Y'}$ assigns all probability mass to $x\in\mathcal{S}$ while preserving $P^*_{Y|Y'}$. As a consequence, $P'_{XY'}:=P'_{X|Y'}P^*_{Y'}$ satisfies the constraint \eqref{eq:feedback_constraint}. We shall show that $P'_{XY'}$ 
yields a strictly larger value of $I(X;Y|Y')$ than $P^*_{XY'}$, leading to a contradiction.
Let
$\mathbbm{p}^*_{y'}(\mathcal{A}):=(P^*_{X|Y'}(x|y'))_{x\in\mathcal{A}}^T$ for $\mathcal{A}\subseteq\mathcal{X}$ and $y'\in\mathcal{Y}$.
Note that
\begin{align}
	\mathbbm{q}^*_{y'}=\mathbb{Q}^{(c)}_{y'}(\mathcal{S})\mathbbm{p}^*_{y'}(\mathcal{S})+\mathbb{Q}^{(c)}_{y'}(\mathcal{X}\backslash\mathcal{S})\mathbbm{p}^*_{y'}(\mathcal{X}\backslash\mathcal{S}),
\end{align}
which implies
\begin{align}
\mathbbm{p}^*_{y'}(\mathcal{S})=(\mathbb{Q}^{(c)}_{y'}(\mathcal{S}))^{-1}(\mathbbm{q}^*_{y'}-\mathbb{Q}^{(c)}_{y'}(\mathcal{X}\backslash\mathcal{S})\mathbbm{p}^*_{y'}(\mathcal{X}\backslash\mathcal{S})).
\end{align}
Therefore, we have
\begin{align}
\mathbbm{p}'_{y'}(\mathcal{S})-\mathbbm{p}^*_{y'}(\mathcal{S})=(\mathbb{Q}^{(c)}_{y'}(\mathcal{S}))^{-1}\mathbb{Q}^{(c)}_{y'}(\mathcal{X}\backslash\mathcal{S})\mathbbm{p}^*_{y'}(\mathcal{X}\backslash\mathcal{S}).
\end{align}
It can be verified that
\begin{align}
	\sum\limits_{x\in\mathcal{S}}(P'_{X|Y'}(x|y')-P^*_{X|Y'}(x|y'))&=\mathbbm{1}^T(\mathbbm{p}'_{y'}(\mathcal{S})-\mathbbm{p}^*_{y'}(\mathcal{S}))\nonumber\\
	&=\mathbbm{1}^T\mathbb{Q}^{(c)}_{y'}(\mathcal{S})(\mathbbm{p}'_{y'}(\mathcal{S})-\mathbbm{p}^*_{y'}(\mathcal{S}))\nonumber\\
	&=\mathbbm{1}^T\mathbb{Q}^{(c)}_{y'}(\mathcal{X}\backslash\mathcal{S})\mathbbm{p}^*_{y'}(\mathcal{X}\backslash\mathcal{S})\nonumber\\
	&=\mathbbm{1}^T\mathbbm{p}^*_{y'}(\mathcal{X}\backslash\mathcal{S})\nonumber\\
	&=\sum\limits_{x\in\mathcal{X}\backslash\mathcal{S}}P^*_{X|Y'}(x|y')\label{eq:no_absolute_value}
\end{align}
and
\begin{align}
\sum\limits_{x\in\mathcal{S}}|P'_{X|Y'}(x|y')-P^*_{X|Y'}(x|y')|
&\leq|\mathcal{S}|\|(\mathbb{Q}^{(c)}_{y'}(\mathcal{S}))^{-1}\|_{\infty}\mathbbm{1}^T\mathbb{Q}^{(c)}_{y'}(\mathcal{X}\backslash\mathcal{S})\mathbbm{p}^*_{y'}(\mathcal{X}\backslash\mathcal{S})\nonumber\\
&=|\mathcal{S}|\|(\mathbb{Q}^{(c)}_{y'}(\mathcal{S}))^{-1}\|_{\infty}\sum\limits_{x\in\mathcal{X}\backslash\mathcal{S}}P^*_{X|Y'}(x|y').\label{eq:absolute_value}
\end{align}
Define\footnote{The quantities $\overline{C}_{y'}(\mathcal{S})$, $\underline{C}_{y'}(\mathcal{S})$, and $\overline{C}_{y'}(\mathcal{X}\backslash\mathcal{S})$ are  non-negative,  since they are defined as the extrema of a set of Kullback-Leibler divergences.}
\begin{align}
	&\overline{C}_{y'}(\mathcal{S}):=\max\limits_{x\in\mathcal{S}}\sum\limits_{y\in\mathcal{Y}}Q(y|x,y')\log\frac{Q(y|x,y')}{P^*_{Y|Y'}(y|y')},\\
	&\underline{C}_{y'}(\mathcal{S}):=\min\limits_{x\in\mathcal{S}}\sum\limits_{y\in\mathcal{Y}}Q(y|x,y')\log\frac{Q(y|x,y')}{P^*_{Y|Y'}(y|y')},\\
		&\overline{C}_{y'}(\mathcal{X}\backslash\mathcal{S}):=\max\limits_{x\in\mathcal{X}\backslash\mathcal{S}}\sum\limits_{y\in\mathcal{Y}}Q(y|x,y')\log\frac{Q(y|x,y')}{P^*_{Y|Y'}(y|y')}.
\end{align}
In view of \eqref{eq:capacity} and \eqref{eq:slackness3},
\begin{align}
	&\lim\limits_{\delta\rightarrow 0}\overline{C}_{y'}(\mathcal{S})=	\lim\limits_{\delta\rightarrow 0}\underline{C}_{y'}(\mathcal{S})=C(W),\label{eq:invoke1}\\
	&\lim\limits_{\delta\rightarrow0}\overline{C}_{y'}(\mathcal{X}\backslash\mathcal{S})=C'(W):=\max\limits_{x\in\mathcal{X}\backslash\mathcal{S}}\sum\limits_{y\in\mathcal{Y}}W(y|x)\log\frac{W(y|x)}{P^{(W)}_Y(y)}<C(W).\label{eq:invoke2}
\end{align}
Observe that
\begin{align}
	&\left.I(X;Y|Y'=y')\right|_{P_{XY'}=P'_{XY'}}-\left.I(X;Y|Y'=y')\right|_{P_{XY'}=P^*_{XY'}}\nonumber\\
	&=\sum\limits_{x\in\mathcal{X}}(P'_{X|Y'}(x|y')-P^*_{X|Y'}(x|y'))\sum\limits_{y\in\mathcal{Y}}Q(y|x,y')\log\frac{Q(y|x,y')}{P^*_{Y|Y'}(y|y')}\nonumber\\
	&=\sum\limits_{x\in\mathcal{S}}(P'_{X|Y'}(x|y')-P^*_{X|Y'}(x|y'))\sum\limits_{y\in\mathcal{Y}}Q(y|x,y')\log\frac{Q(y|x,y')}{P^*_{Y|Y'}(y|y')}\nonumber\\
	&\quad-\sum\limits_{x\in\mathcal{X}\backslash\mathcal{S}}P^*_{X|Y'}(x|y')\sum\limits_{y\in\mathcal{Y}}Q(y|x,y')\log\frac{Q(y|x,y')}{P^*_{Y|Y'}(y|y')}.\label{eq:sum_bound}
\end{align}
Let $\mathcal{S}^+_{y'}:=\{x\in\mathcal{S}:P'_{X|Y'}(x|y')\geq P^*_{X|Y'}(x|y')\}$ and $\mathcal{S}^-_{y'}:=\{x\in\mathcal{S}:P'_{X|Y'}(x|y')< P^*_{X|Y'}(x|y')\}$.
We have
\begin{align}
	&\sum\limits_{x\in\mathcal{S}}(P'_{X|Y'}(x|y')-P^*_{X|Y'}(x|y'))\sum\limits_{y\in\mathcal{Y}}Q(y|x,y')\log\frac{Q(y|x,y')}{P^*_{Y|Y'}(y|y')}\nonumber\\
		&=\sum\limits_{x\in\mathcal{S}^+_{y'}}(P'_{X|Y'}(x|y')-P^*_{X|Y'}(x|y'))\sum\limits_{y\in\mathcal{Y}}Q(y|x,y')\log\frac{Q(y|x,y')}{P^*_{Y|Y'}(y|y')}\nonumber\\
	&\quad+\sum\limits_{x\in\mathcal{S}^-_{y'}}(P'_{X|Y'}(x|y')-P^*_{X|Y'}(x|y'))\sum\limits_{y\in\mathcal{Y}}Q(y|x,y')\log\frac{Q(y|x,y')}{P^*_{Y|Y'}(y|y')}\nonumber\\
	&\geq\sum\limits_{x\in\mathcal{S}^+_{y'}}(P'_{X|Y'}(x|y')-P^*_{X|Y'}(x|y'))\underline{C}_{y'}(\mathcal{S})+\sum\limits_{x\in\mathcal{S}^-_{y'}}(P'_{X|Y'}(x|y')-P^*_{X|Y'}(x|y'))\overline{C}_{y'}(\mathcal{S})\nonumber\\
	&=\sum\limits_{x\in\mathcal{S}}(P'_{X|Y'}(x|y')-P^*_{X|Y'}(x|y'))\underline{C}_{y'}(\mathcal{S})+\sum\limits_{x\in\mathcal{S}^-_{y'}}(P'_{X|Y'}(x|y')-P^*_{X|Y'}(x|y'))(\overline{C}_{y'}(\mathcal{S})-\underline{C}_{y'}(\mathcal{S}))\nonumber\\
	&\geq\sum\limits_{x\in\mathcal{S}}(P'_{X|Y'}(x|y')-P^*_{X|Y'}(x|y'))\underline{C}_{y'}(\mathcal{S})-\sum\limits_{x\in\mathcal{S}}|P'_{X|Y'}(x|y')-P^*_{X|Y'}(x|y')|(\overline{C}_{y'}(\mathcal{S})-\underline{C}_{y'}(\mathcal{S})),
\end{align}
which, together with \eqref{eq:no_absolute_value} and \eqref{eq:absolute_value}, yields
\begin{align}
&\sum\limits_{x\in\mathcal{S}}(P'_{X|Y'}(x|y')-P^*_{X|Y'}(x|y'))\sum\limits_{y\in\mathcal{Y}}Q(y|x,y')\log\frac{Q(y|x,y')}{P^*_{Y|Y'}(y|y')}\nonumber\\
&\geq (\underline{C}_{y'}(\mathcal{S})-|\mathcal{S}|\|(\mathbb{Q}^{c}_{y'}(\mathcal{S}))^{-1}\|_{\infty}(\overline{C}_{y'}(\mathcal{S})-\underline{C}_{y'}(\mathcal{S})))\sum\limits_{x\in\mathcal{X}\backslash\mathcal{S}}P^*_{X|Y'}(x|y').\label{eq:bound1}
\end{align}
Moreover,
\begin{align}
\sum\limits_{x\in\mathcal{X}\backslash\mathcal{S}}P^*_{X|Y'}(x|y')\sum\limits_{y\in\mathcal{Y}}Q(y|x,y')\log\frac{Q(y|x,y')}{P^*_{Y|Y'}(y|y')}&\leq\sum\limits_{x\in\mathcal{X}\backslash\mathcal{S}}P^*_{X|Y'}(x|y')\overline{C}_{y'}(\mathcal{X}\backslash\mathcal{S})\nonumber\\
&=\overline{C}_{y'}(\mathcal{X}\backslash\mathcal{S})\sum\limits_{x\in\mathcal{X}\backslash\mathcal{S}}P^*_{X|Y'}(x|y').\label{eq:bound2}
\end{align}
Substituting \eqref{eq:bound1} and \eqref{eq:bound2} into \eqref{eq:sum_bound} gives
\begin{align}
	&\left.I(X;Y|Y'=y')\right|_{P_{XY'}=P'_{XY'}}-\left.I(X;Y|Y'=y')\right|_{P_{XY'}=P^*_{XY'}}\nonumber\\
	&\geq(\underline{C}_{y'}(\mathcal{S})-\overline{C}_{y'}(\mathcal{X}\backslash\mathcal{S})-|\mathcal{S}|\|(\mathbb{Q}^{c}_{y'}(\mathcal{S}))^{-1}\|_{\infty}(\overline{C}_{y'}(\mathcal{S})-\underline{C}_{y'}(\mathcal{S})))\sum\limits_{x\in\mathcal{X}\backslash\mathcal{S}}P^*_{X|Y'}(x|y').
\end{align}
Since $\mathbb{Q}^{(c)}_{y'}(\mathcal{S})\rightarrow\mathbb{W}(\mathcal{S})$ as $\delta\rightarrow 0$, invoking \eqref{eq:invoke1} and \eqref{eq:invoke2} shows
\begin{align}
	\lim\limits_{\delta\rightarrow 0}\underline{C}_{y'}(\mathcal{S})-\overline{C}_{y'}(\mathcal{X}\backslash\mathcal{S})-|\mathcal{S}|\|(\mathbb{Q}^{c}_{y'}(\mathcal{S}))^{-1}\|_{\infty}(\overline{C}_{y'}(\mathcal{S})-\underline{C}_{y'}(\mathcal{S}))=C(W)-C'(W)>0.
\end{align}
Therefore, when $\delta$ is sufficiently small,
\begin{align}
	\left.I(X;Y|Y'=y')\right|_{P_{XY'}=P'_{XY'}}-\left.I(X;Y|Y'=y')\right|_{P_{XY}=P^*_{XY'}}\geq\frac{1}{2}(C(W)-C'(W))\sum\limits_{x\in\mathcal{X}\backslash\mathcal{S}}P^*_{X|Y'}(x|y')
\end{align}
for all $y'\in\mathcal{Y}$, and consequently
\begin{align}
&\left.I(X;Y|Y')\right|_{P_{XY'}=P'_{XY'}}-\left.I(X;Y|Y')\right|_{P_{XY'}=P^*_{XY'}}\nonumber\\
&=\sum\limits_{y'\in\mathcal{Y}}P^*_{Y'}(y')(\left.I(X;Y|Y'=y')\right|_{P_{XY'}=P'_{XY'}}-\left.I(X;Y|Y'=y')\right|_{P_{XY'}=P^*_{XY'}})\nonumber\\
&\geq\frac{1}{2}(C(W)-C'(W))\sum\limits_{y'\in\mathcal{Y}}P^*_{Y'}(y)\sum\limits_{x\in\mathcal{X}\backslash\mathcal{S}}P^*_{X|Y'}(x|y')\nonumber\\
&=\frac{1}{2}(C(W)-C'(W))\sum\limits_{x\in\mathcal{X}\backslash\mathcal{S}}P^*_{X}(x)\nonumber\\
&>0,
\end{align}
contradicting the fact that $P^*_{XY'}$ maximizes $I(X;Y|Y')$ subject to the constraint \eqref{eq:feedback_constraint}. This completes the proof of Lemma \ref{lem:refined_version}.
	\end{IEEEproof}
	
Lemma \ref{lem:refined_version} shows that, for any 
$W$-centered 
$\delta$-approximately memoryless surjective POST channel, only the subset 
$\mathcal{S}$ of the input alphabet $\mathcal{X}$ needs to be used in order to achieve the feedback capacity when 
$\delta$ is sufficiently small. Because this subset has cardinality 
$|\mathcal{S}|=|\mathcal{Y}|$, the problem effectively reduces to one with matching input and output alphabet sizes. The latter case has already been fully resolved, thereby completing the argument.

 %Moreover, $P'_{X|Y'}(x|y')$  coincides $P^*_{X|Y'}(x|y')$ for all $x\in\mathcal{X}$ if and only if $\sum_{x\in\mathcal{S}}P^*_{X|Y'}(x|y')=1$.
%Note that $P'_{XY'}:=P'_{X|Y'}P^*_{Y'}$ satisfies the constraint \eqref{eq:feedback_constraint} because replacing $P^*_{X|Y'}$ with $P'_{X|Y'}$ preserves $P*_{Y|Y'}$.
%For $y'$ such that $\sum_{x\in\mathcal{S}}P^*_{X|Y'}(x|y')=1$, we have
%\begin{align}
%	\left.I(X;Y|Y'=y')\right|_{P'}=\left.I(X;Y|Y'=y')\right|_{P^*}.
%\end{align}
%For $y'$ such that $\sum_{x\in\mathcal{S}}P^*_{X|Y'}(x|y')=1$, we have

%\begin{align}
%	\left.I(X;Y|Y'=y')\right|_{P_{XY'}=P'_{XY'}}=\left.\sum\limits_{y'\in\mathcal{Y}}P^*_{Y'}(y')I(X;Y|Y'=y')\right|_{P_{X|Y'}=P'_{X|Y'}}
%\end{align}

%$P^{(Q)}_{XY'}$ converges implies $P^{(Q)}_{YY'}$ converges, which further implies $P^{(Q)}_{Y|Y'}$ converges. The full rank condition implies that for each $y'$, we can find $p_{X|y'}$ to preserve $P_{Y|y'}$ while increasing $I(X;Y|Y'=y')$ as compared to an assignment that assigns some probability outside $\mathcal{A}$.

%This lemma implies uniqueness. 

\section{Discussion}\label{sec:discussion}

Although the proof of Theorem \ref{thm:main} is technically involved, the underlying ideas are conceptually simple. We briefly summarize them here to clarify the logic of the argument and to highlight the role played by the surjectivity condition.

First consider the case $|\mathcal{X}|=|\mathcal{Y}|$. When the $n$-th fold channel transition matrix $\mathbb{Q}^{(n)}_{y_0}$ is full rank (i.e., of rank $|\mathcal{X}|^n$),   the output probability vector $\mathbbm{q}^{(n)}_{y_0}$ induced by the optimal feedback strategy---in fact, any output probability vector of dimension $|\mathcal{Y}|^n=|\mathcal{X}|^n$---is necessarily contained in the linear span of the column vectors $\mathbbm{v}^{(n)}_{y_0,0}, \mathbbm{v}^{(n)}_{y_0,1}, \ldots,\mathbbm{v}^{(n)}_{y_0,|\mathcal{X}|^n-1}$ of $\mathbb{Q}^{(n)}_{y_0}$. Consequently, there exist coefficients $\beta_0,\beta_1,\ldots,\beta_{|\mathcal{X}|^n-1}$ satisfying
\begin{align} \sum_{i=0}^{|\mathcal{X}|^n-1}\beta_i=1 \label{eq:linear_span}
\end{align}	
such that
\begin{align}
	\sum\limits_{i=0}^{|\mathcal{X}|^n-1}\beta_i\mathbbm{v}^{(n)}_{y_0,i}=\mathbbm{q}^{(n)}_{y_0}.\label{eq:linear_space}
\end{align}
We shall refer to the space defined by \eqref{eq:linear_space} subject to the normalization constraint \eqref{eq:linear_span} as the affine span of the column vectors of $\mathbb{Q}^{(n)}_{y_0}$.
Note that inclusion in the affine span does not, by itself, ensure that
 $\mathbbm{q}^{(n)}_{y_0}$ is realizable by an input distribution without feedback. For this, $\mathbbm{q}^{(n)}_{y_0}$ must lie in the convex hull of these column vectors, which requires each coefficient to be nonnegative:
\begin{align}
	\beta_i\geq 0\label{eq:convexhull}
\end{align}
for $i=0,1,\ldots,|\mathcal{X}|^n-1$. Note that \eqref{eq:linear_span} and \eqref{eq:convexhull} correspond to \eqref{eq:normalization} and \eqref{eq:convex_hull}, respectively.

\begin{figure}[htbp]
	\centerline{\includegraphics[width=15cm]{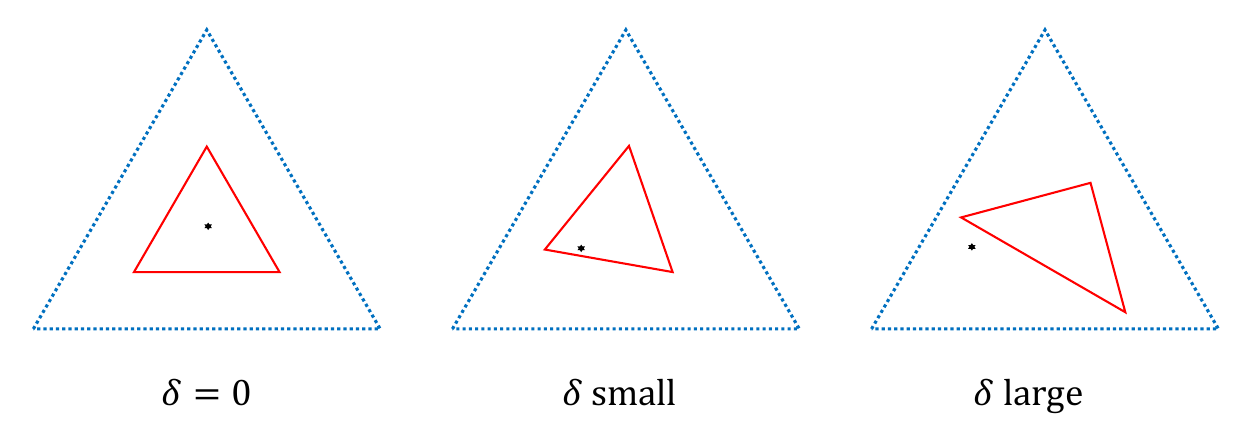}} \caption{In the provided visualization, the dashed triangle, the solid triangle, and the asterisk represent the affine span of the columns of $\mathbb{Q}^{(n)}_{y_0}$, their convex hull, and the output probability vector 
		$\mathbbm{q}^{(n)}_{y_0}$
		induced by the optimal feedback strategy, respectively. Under the memoryless setting where $\delta=0$, the vector $\mathbbm{q}^{(n)}_{y_0}$
		is guaranteed to reside within the convex hull.
		When a small perturbation $\delta$ is introduced---representing an approximately memoryless regime---neither $\mathbbm{q}^{(n)}_{y_0}$ nor the convex hull undergoes significant deformation,
		maintaining their containment relationship. However, as the magnitude of the perturbation $\delta$ increases, the divergence of their relative trajectories may eventually cause   $\mathbbm{q}^{(n)}_{y_0}$ to fall outside the boundaries of the convex hull.}
	\label{fig1} 
\end{figure}

To gain insight into when \eqref{eq:convexhull} holds, it is helpful to consider the degenerate case in which the single-use channel transition matrices $\mathbb{Q}^{(c)}_0,\mathbb{Q}^{(c)}_1,\ldots,\mathbb{Q}^{(c)}_{|\mathcal{Y}|-1}$ all coincide with $\mathbb{W}$. In this setting, the full-rankness part of the simplified surjectivity condition guarantees that the 
$n$-fold channel transition matrix $\mathbb{Q}^{(n)}_{y_0}$ has rank $|\mathcal{X}|^n$. Moreover, because the channel is memoryless, the optimal feedback strategy reduces to a non-feedback strategy (i.e., using i.i.d. inputs), the corresponding output probability vector $\mathbbm{q}^{(n)}_{y_0}$ indeed lies in the convex hull of the columns of $\mathbb{Q}^{(n)}_{y_0}$. The positivity  part of the simplified surjectivity condition further ensures that $\mathbbm{q}^{(n)}_{y_0}$ lies in the strict interior of this convex hull. Consequently, under small perturbations, neither $\mathbbm{q}^{(n)}_{y_0}$ nor the convex hull\footnote{In particular,  the full rankness part of the surjectivity condition ensures that the dimension of the convex hull remains unchanged under sufficiently small perturbations.} changes substantially, and thus the former should remain contained within the latter. This provides the key intuition for why, in the case of approximately memoryless POST channels, the output process induced by the optimal feedback strategy can still be simulated in a non-feedback manner, thereby implying that feedback does not increase capacity.
See Fig. \ref{fig1} for an illustration.

The same intuition extends to the case 
$|\mathcal{X}|>|\mathcal{Y}|$. In this regime, the input space has higher dimension than the output space, providing additional flexibility for shaping the output process without feedback\footnote{Specifically, the dimension of the convex hull of the columns of $\mathbb{Q}^{(n)}_{y_0}$ remains $|\mathcal{Y}|^n-1$, but it may have up to $|\mathcal{X}|^n$ extreme points, as opposed to only $|\mathcal{Y}|^n$ extreme points when $|\mathcal{X}|=|\mathcal{Y}|$}. For memoryless channels,  one never needs more input symbols than output symbols to achieve capacity. For POST channels, the situation is subtler: the optimal strategy may, in principle, use different subsets of input symbols under different channel states, and this added flexibility introduces analytical challenges. The strict-complementary-slackness part of the surjectivity condition is designed precisely to rule out such complications. It guarantees the existence of a fixed subset of input symbols, of cardinality 
$|\mathcal{Y}|$, that is uniformly strictly better than the remaining symbols. As a result, under sufficiently small perturbations, this same subset continues to suffice across all channel states, allowing the problem to be smoothly reduced to the equal–alphabet–size case.

With this understanding, it becomes clear why the case $|\mathcal{X}|<|\mathcal{Y}|$ is fundamentally different. In this regime, the dimension of the affine span of the columns of $\mathbb{Q}^{(n)}_{y_0}$  is at most $|\mathcal{X}|^n-1$, which is strictly smaller than the dimension of the ambient space in which the output probability vector $\mathbbm{q}^{(n)}_{y_0}$ resides, namely $|\mathcal{Y}|^n-1$. Although in the memoryless case the corresponding $\mathbbm{q}^{(n)}_{y_0}$ still lies in the convex hull of the columns of $\mathbb{Q}^{(n)}_{y_0}$, the presence of additional dimensions in the ambient output space means that even a small perturbation may cause $\mathbbm{q}^{(n)}_{y_0}$ to move outside  the linear span of these columns, let alone their affine span or convex hull. For essentially the same reason, extending the argument to the case $|\mathcal{X}|=|\mathcal{Y}|$ with rank-deficient channel transition matrices rank deficiency is also nontrivial. See Fig. \ref{fig2} for an illustration.

\begin{figure}[htbp]
	\centerline{\includegraphics[width=10cm]{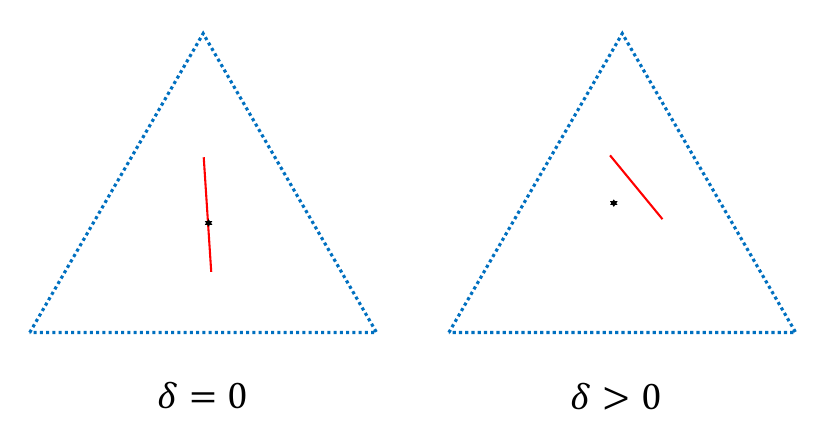}} \caption{In this visualization,  the dashed triangle, the solid line segment, and the asterisk represent the ambient output space, the convex hull of the columns of $\mathbb{Q}^{(n)}_{y_0}$, and the output probability vector 
		$\mathbbm{q}^{(n)}_{y_0}$ 	induced by the optimal feedback strategy, respectively. Under the memoryless setting ($\delta=0$), $\mathbbm{q}^{(n)}_{y_0}$ is guaranteed to reside within the convex hull. However, as the perturbation $\delta$ deviates even slightly from zero, 
		the availability of an extra dimension in the  ambient output space  allows $\mathbbm{q}^{(n)}_{y_0}$ to move beyond  the  linear span of the columns of $\mathbb{Q}^{(n)}_{y_0}$, breaking the containment relationship.}
	\label{fig2} 
\end{figure}

From this perspective, feedback strategies possess additional degrees of freedom compared with non-feedback strategies:
\begin{enumerate}
	\item The output probability vector $\mathbbm{q}^{(n)}_{y_0}$ can move into regions of the affine span  of the columns of $\mathbb{Q}^{(n)}_{y_0}$ that lie outside their convex hull.

	\item The output probability vector $\mathbbm{q}^{(n)}_{y_0}$ is not confined to this affine span and may instead enter the additional dimensions available in the ambient output space.
	\end{enumerate}
	When the input dominates (i.e., $\mathbb{Q}^{(n)}_{y_0}$ has full row rank), only the first degree of freedom is available; When the output dominates (i.e.,  $\mathbb{Q}^{(n)}_{y_0}$ is row rank-deficient), both become available. Whether feedback increases capacity or not can thus be interpreted as a manifestation of whether exploiting these extra degrees of freedom provides any advantage.

To better understand the second degree of freedom and to see why the surjectivity condition is not superfluous, consider the following illustrative examples.
\begin{itemize}
	\item Example 1: $|\mathcal{X}|=2$ and $|\mathcal{Y}|=3$. Let the transition matrix $\mathbb{W}$ of the  memoryless reference channel be given by $\mathbb{W}:=(\mathbbm{w}_0,\mathbbm{w}_1)$,
	where
	\begin{align}
		\mathbbm{w}_0:=\left(\begin{matrix}
			\frac{2}{3}\\
			\frac{1}{6}\\
			\frac{1}{6}
		\end{matrix}\right),\quad \mathbbm{w}_1:=\left(\begin{matrix}
		\frac{1}{5}\\
		\frac{3}{5}\\
		\frac{1}{5}
		\end{matrix}\right).
	\end{align}
	Since $|\mathcal{X}|<|\mathcal{Y}|$, the surjectivity condition is violated.
	Define $\mathbb{U}_{y'}:=(\mathbbm{u}_{y',0},\mathbbm{u}_{y',1})$ for $y'=0,1,2$, with
	\begin{align}
		&\mathbbm{u}_{0,0}:=-\frac{1}{2}\mathbbm{w}_0+\frac{1}{2}\mathbbm{w}_1,\quad\mathbbm{u}_{0,1}:=\frac{1}{3}\mathbbm{w}_0-\frac{1}{3}\mathbbm{w}_1,\label{eq:u1}\\
		&\mathbbm{u}_{1,0}:=-\frac{2}{5}\mathbbm{w}_0+\frac{2}{5}\mathbbm{w}_1,\quad\mathbbm{u}_{1,1}:=\frac{1}{2}\mathbbm{w}_0-\frac{1}{2}\mathbbm{w}_1,\label{eq:u2}\\
		&\mathbbm{u}_{2,0}:=-\frac{2}{3}\mathbbm{w}_0+\frac{2}{3}\mathbbm{w}_1,\quad\mathbbm{u}_{2,1}:=\frac{3}{5}\mathbbm{w}_0-\frac{3}{5}\mathbbm{w}_1,\label{eq:u3}
	\end{align}
	and set the channel transition matrices of the POST channel 
	\begin{align}
		\mathbb{Q}^{(c)}_{y'}:=\mathbb{W}+\epsilon\mathbb{U}_{y'}\label{eq:POST}
	\end{align}
	for $y'=0,1,2$. Following \eqref{eq:recursionmatrix1}, we  recursively construct the $n$-th fold channel transition matrix $\mathbb{Q}^{(n)}_{y_0}$.
	Moreover, the Markov  kernel $P^*_{Y|Y'}$ can be obtained numerically by solving the optimization problem in \eqref{eq:feedback_capacity}--\eqref{eq:feedback_constraint}. Using this kernel, the output probability vector
 $\mathbbm{q}^{(n)}_{y_0}$ induced by the optimal feedback strategy can be constructed via \eqref{eq:recursionmatrix2}.
	
	Now consider  the case $n=2$. Observe that $\mathbb{Q}^{(2)}_{y_0}$ is a $9\times 4$ matrix of full column rank. Define
	\begin{align}
		\mathbbm{p}^{(2)}_{y_0}:=\mathbb{G}^{(2)}_{y_0}\mathbbm{q}^{(2)}_{y_0},
	\end{align}
	where
	 $\mathbb{G}^{(2)}_{y_0}$ denotes the Moore–Penrose pseudoinverse of $\mathbb{Q}^{(2)}_{y_0}$. Then
	 \begin{align} \hat{\mathbbm{q}}^{(2)}_{y_0}:=\mathbb{Q}^{(2)}_{y_0}\mathbbm{p}^{(2)}_{y_0}
	 	\end{align}
	 	 provides the least-squares approximation of $\mathbbm{q}^{(2)}_{y_0}$ using the columns of $\mathbb{Q}^{(2)}_{y_0}$. If $\|\hat{\mathbbm{q}}^{(2)}_{y_0}-\mathbbm{q}^{(2)}_{y_0}\|_1>0$, then $\mathbbm{q}^{(2)}_{y_0}$ does not lie in the linear span of the columns of $\mathbb{Q}^{(2)}_{y_0}$, let alone their affine span or convex hull. Define
	 	 \begin{align}
	 	 	D:=\sum\limits_{y_0=0}^2\|\hat{\mathbbm{q}}^{(2)}_{y_0}-\mathbbm{q}^{(2)}_{y_0}\|_1.\label{eq:D}
	 	 \end{align}
	 	 We plot $D$ vs. $\epsilon$ in Fig. \ref{example1}. As can be seen,
	 	  $D$ becomes strictly positive as soon as $\epsilon$ is non-zero. This indicates that even a small perturbation can cause $\mathbbm{q}^{(2)}_{y_0}$ to leave the columnn space of $\mathbb{Q}^{(2)}_{y_0}$, rendering it unrealizable by any non-feedback strategy.

\begin{figure}[htbp]
	\centerline{\includegraphics[width=12cm]{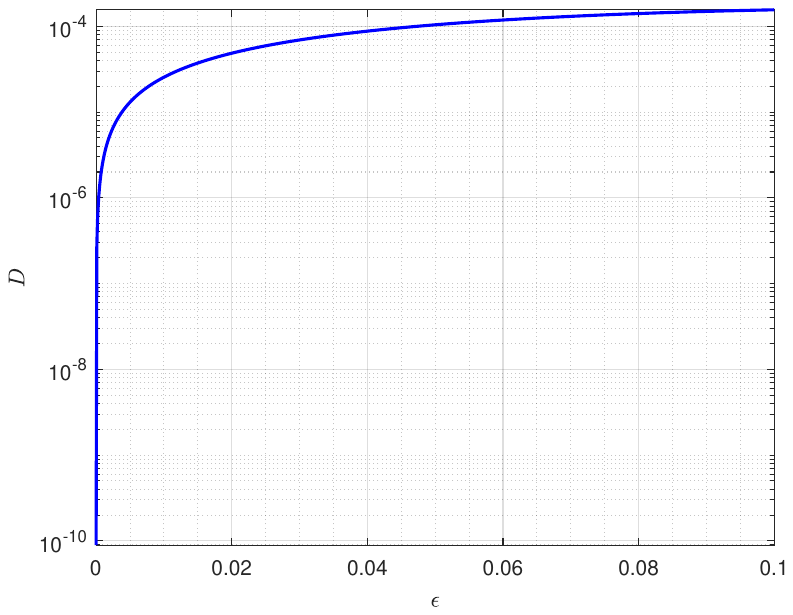}} \caption{Plot of $D$ vs. $\epsilon$. The positivity of $D$ shows that, for at least some initial state $y_0$, the output probability vector $q^{(2)}_{y_0}$ no longer resides in the column space of the channel transition matrix $\mathbb{Q}^{(2)}_{y_0}$, but instead occupies directions in the ambient output space that are inaccessible to non-feedback strategies.}
	\label{example1} 
\end{figure}

\item Example 2: $|\mathcal{X}|=|\mathcal{Y}|=3$. Let the transition matrix $\mathbb{W}$ of the memoryless reference channel be given by $\mathbb{W}:=(\mathbbm{w}_0,\mathbbm{w}_1,\mathbbm{w}_2)$,
where
\begin{align}
	\mathbbm{w}_0:=\left(\begin{matrix}
		\frac{1}{2}\\
		\frac{1}{3}\\
		\frac{1}{6}
	\end{matrix}\right),\quad \mathbbm{w}_1:=\left(\begin{matrix}
		\frac{1}{4}\\
		\frac{1}{2}\\
		\frac{1}{4}
	\end{matrix}\right),\quad \mathbbm{w}_2:=\frac{2}{3}\mathbbm{w}_0+\frac{1}{3}\mathbbm{w}_1.
\end{align}
Since $\mathbb{W}$ is rank-deficient, the surjectivity condition is violated. For $y'=0,1,2$, let $\mathbb{U}_{y'}:=(\mathbbm{u}_{y',0},\mathbbm{u}_{y',1},\mathbbm{u}_{y',2})$, where $\mathbbm{u}_{y',0}$ and $\mathbbm{u}_{y',1}$ are defined in the same manner as in \eqref{eq:u1}--\eqref{eq:u3}, and
\begin{align}
\mathbbm{u}_{y',2}:=\frac{2}{3}\mathbbm{u}_{y',0}+\frac{1}{3}\mathbbm{u}_{y',1}.
\end{align}
The channel transition matrices $\mathbbm{Q}^{(c)}_0,\mathbbm{Q}^{(c)}_1,\mathbbm{Q}^{(c)}_2$ of the POST channel are then defined according to \eqref{eq:POST}.
As in the first example, we can recursively construct $\mathbb{Q}^{(n)}_{y_0}$ and $\mathbbm{q}^{(n)}_{y_0}$.

Now consider the case $n=2$. In the current example, $\mathbb{Q}^{(2)}_{y_0}$ is a $9\times 9$ matrix of rank $4$. We define $D$ in the same way as in \eqref{eq:D} and plot it as a function of $\epsilon$  in Fig. \ref{example2}. Again, $D$ becomes strictly positive once $\epsilon$ deviates from zero. This shows that column-rank deficiency in the equal-alphabet setting can lead to the same phenomenon observed when the output alphabet is larger than the input alphabet.

\begin{figure}[htbp]
	\centerline{\includegraphics[width=12cm]{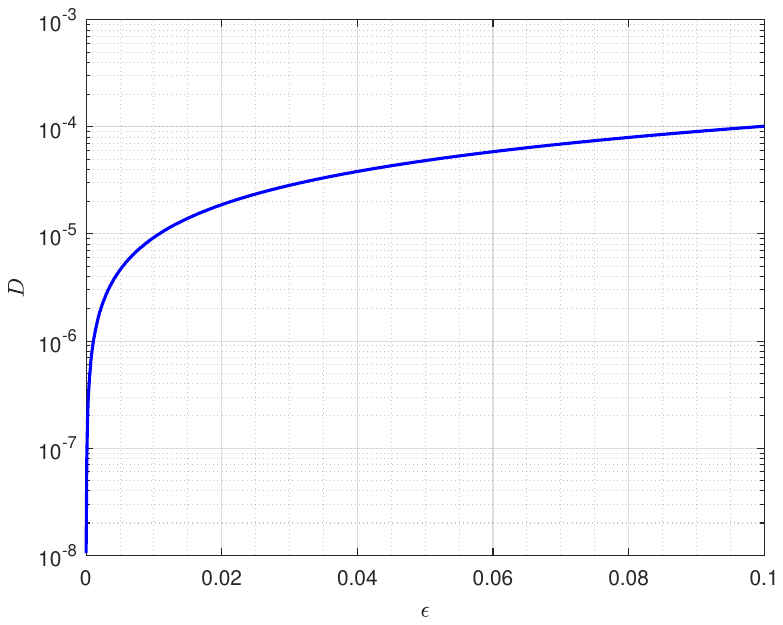}} \caption{Plot of $D$ vs. $\epsilon$. The positivity of 
	$D$ shows that rank deficiency in the equal-alphabet setting can render certain dimensions of the ambient output space inaccessible to non-feedback strategies, whereas the optimal feedback strategy actively exploits these dimensions even when the channel exhibits only slight memory.}
	\label{example2} 
\end{figure}

\end{itemize}

The bulk of this work relies on the fact that the realizability of the Markov process induced by the optimal feedback strategy through a POST channel via a non-feedback strategy serves as a sufficient condition for the non-feedback capacity to coincide with the feedback capacity. We conclude this section by showing that this condition is also necessary.
Consequently,  the inability to simulate this Markov process---as seen in the preceding examples---should be viewed as fundamental evidence that feedback strictly improves capacity, rather than a limitation of our methodology.

% Therefore, the non-existence of a non-feedback strategy to realize the output probability vector induced by the optimal feedback strategy revealed in the previous two examples should  not be interpreted as a limitation of our proof technique, but rather as the evidence that feedback strictly improves capacity.

\begin{theorem}\label{thm:necessary_sufficient}
	For any POST channel $Q$, 
	\begin{align}
		C^*(Q)=C^*_f(Q)
	\end{align}
	if and only if, for every positive integer $n$, there exist a joint input-initial-state distribution $P_{X^nY_0}$ and a maximizer $P^*_{XY'}$ of the optimization problem in \eqref{eq:feedback_capacity}--\eqref{eq:feedback_constraint} 
	such that their corresponding distributions $P_{X^nY^nY_0}$ and $P^*_{XYY'}$, induced through the POST channel $Q$, satisfy
	\begin{align}
		P_{Y^nY_0}(y^n,y_0)=P^*_{Y'}(y_0)\prod\limits_{t=1}^nP^*_{Y|Y'}(y_t|y_{t-1})\label{eq:first_condition}
	\end{align}
	and 
	\begin{align}
		P_{X_tY_{t-1}}=P^*_{XY'}\label{eq:second_condition}
	\end{align}
	for $t=1,2,\ldots,n$, where $C^*(Q)$ and $C^*_f(Q)$ are defined in \eqref{eq:nonfeedback_capacity} and \eqref{eq:feedback_capacity}, respectively.
\end{theorem}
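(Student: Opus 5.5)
The proof splits into the two directions. Sufficiency reuses the computation from Section \ref{sec:proof1}; necessity requires analyzing when $\frac{1}{n}I(X^n;Y^n|Y_0)$ can approach $C^*_f(Q)$.

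\textbf{Sufficiency.} Fix $n$, and suppose \eqref{eq:first_condition} and \eqref{eq:second_condition} hold for some $P_{X^nY_0}$ and some maximizer $P^*_{XY'}$. Expand $I(X^n;Y^n|Y_0)$ exactly as in \eqref{eq:sub}.

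The output term is handled by \eqref{eq:first_condition}. It gives $H(Y_t|Y_0^{t-1})=H(Y_t|Y_{t-1})=H(Y|Y')$ under $P^*$, using the stationarity of the Markov chain, which holds because $P^*_Y=P^*_{Y'}$.

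The channel term is handled by \eqref{eq:second_condition}. It gives $H(Y_t|X_t,Y_{t-1})=H(Y|X,Y')$. This no longer needs the full-rank argument, since the joint law of $(X_t,Y_{t-1})$ is prescribed directly.

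Hence $I(X^n;Y^n|Y_0)=nI(X;Y|Y')=nC^*_f(Q)$, so $C_n(Q)\ge C^*_f(Q)$ for every $n$. Letting $n\to\infty$ gives $C^*(Q)\ge C^*_f(Q)$.

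For the reverse inequality $C^*(Q)\le C^*_f(Q)$, bound $I(X^n;Y^n|Y_0)\le\sum_t I(X_t;Y_t|Y_{t-1})$. This holds because $H(Y_t|Y_0^{t-1})\le H(Y_t|Y_{t-1})$ and $H(Y_t|X^n,Y_0^{t-1})=H(Y_t|X_t,Y_{t-1})$. Then apply concavity of $I(X;Y|Y')$ in $P_{XY'}$ to the time-averaged pair $\bar P_{XY'}=\frac1n\sum_t P_{X_tY_{t-1}}$. The time average nearly satisfies the constraint $P_Y=P_{Y'}$, with a discrepancy of order $1/n$ coming from the telescoping boundary terms. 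Continuity of the constrained maximum then gives the bound in the limit.

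\textbf{Necessity.} Assume $C^*(Q)=C^*_f(Q)$. The strategy is to take near-optimal input distributions for $C_N(Q)$ and extract from them exact distributions satisfying both conditions. Using the same chain of inequalities as above, the slack
\[
nC^*_f(Q)-I(X^n;Y^n|Y_0)
\]
decomposes into a sum of nonnegative pieces:
\begin{itemize}
\item the gaps $I(Y_t;Y_0^{t-2}|Y_{t-1})$, which measure departure from the Markov property;
\item the concavity gap;
\item the gap $C^*_f(Q)-I(X;Y|Y')|_{\bar P}$.
\end{itemize}

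\textbf{Main obstacle.} When $C^*(Q)=C^*_f(Q)$ holds only as a limit, each piece is only asymptotically small on average. For a fixed $n$, the conditions demand exact equality. The plan for this step is as follows.
\begin{enumerate}
\item For large $N$, take an optimal $P^{(N)}_{X^NY_0}$ for $C_N(Q)$.
\item Select a window of $n$ consecutive time indices where the sum of gaps is $o(1)$; averaging over window positions guarantees such a window exists.
\item Restrict to that window, taking $(X_{s+1}^{s+n},Y_s)$ as the new input–initial-state pair.
\item Pass to a convergent subsequence as $N\to\infty$, using compactness of the finite-dimensional probability simplices.
\end{enumerate}

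In the limit, continuity forces three things:
\begin{itemize}
\item each conditional mutual information gap is exactly zero, so the output is Markov;
\item each $P_{X_tY_{t-1}}$ attains the constrained maximum;
\item the marginals satisfy the stationarity constraint.
\end{itemize}

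It remains to show that all $P_{X_tY_{t-1}}$ equal one common maximizer. Averaging alone only shows that their mixture is optimal. The plan is to use the fact that the set of maximizers of a concave program is convex. Combined with zero concavity gap, this should give that every $P_{X_tY_{t-1}}$ induces the same $P^*_{Y|Y'}$ and the same $P^*_{Y'}$, since the output entropy term is strictly concave in $P_Y$. Any residual non-uniqueness in $P_{X|Y'}$ would then be absorbed by choosing $P^*_{XY'}$ appropriately, or by symmetrizing within the window.
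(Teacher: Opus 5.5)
Your sufficiency direction is correct. The first half is the paper's computation. Your bound $C^*(Q)\le C^*_f(Q)$, via concavity of $I(X;Y|Y')$ in $P_{XY'}$, the $O(1/n)$-relaxed constraint, and continuity of the relaxed value, is a valid and slightly more direct substitute for the paper's use of Lemma~\ref{lem:stationary} with $n=1$.

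The necessity direction has a genuine gap. Selecting one good window from one near-optimal long block cannot deliver the stationarity that \eqref{eq:first_condition} and \eqref{eq:second_condition} encode, and your claim that continuity forces it is false. Take $\mathcal{X}=\mathcal{Y}=\{0,1\}$ and $Q(y|x,y')=1\{y=1-y'\}$. Then $C_N(Q)=C^*_f(Q)=0$, so any $P_{X^NY_0}$ with $Y_0=0$ almost surely is optimal. The outputs are deterministic, $Y_t=t \bmod 2$, and every piece of your slack decomposition vanishes in every window. Yet every extracted window has $P_{Y_s}$ equal to a point mass, whereas every maximizer has $P^*_{Y'}$ uniform (the only flip-invariant law). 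So \eqref{eq:first_condition} fails, and the individual $P_{X_tY_{t-1}}$ are not even feasible.

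The faulty step is the assertion that a vanishing concavity gap pins down a common $P^*_{Y'}$. For fixed $P_{X|Y'}$, $I(X;Y|Y')=\sum_{y'}P_{Y'}(y')I(X;Y|Y'=y')$ is affine in $P_{Y'}$ (cf.\ the null directions of the Hessian in Appendix~\ref{app:uniform_convergence}). A zero concavity gap therefore only forces the $P_{X_tY_{t-1}}$ to share the kernel $P_{Y|Y'}$ on their supports, not the state marginal. (A minor point: your third piece $C^*_f(Q)-I(X;Y|Y')|_{\bar P}$ need not be nonnegative, since $\bar P$ is only approximately feasible.)

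The missing idea is to use the average over window positions as the new distribution, not merely to locate a good window. With $\Theta$ uniform over the shifts, take $(X_{\Theta+1}^{\Theta+n},Y_\Theta)$ as the input--initial-state pair. This mixture has three useful properties:
\begin{enumerate}
\item It still factors through the channel.
\item Its normalized mutual information is at least the average of the window terms, since $I(X^n(\Theta);Y^n(\Theta)|Y_0(\Theta))=I(\Theta,X^n(\Theta);Y^n(\Theta)|Y_0(\Theta))\ge I(X^n(\Theta);Y^n(\Theta)|Y_0(\Theta),\Theta)$.
\item Its laws $P_{X_tY_{t-1}}$ and $P_{Y_t}$ are shift-invariant up to $O(1/k)$ boundary terms.
\end{enumerate}
A compactness limit then yields, for each $n$, an exactly stationary $P_{X^nY_0}$ with $\frac1n I(X^n;Y^n|Y_0)\ge C^*(Q)$; this is Lemma~\ref{lem:stationary}.

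Such mixing need not preserve any Markov property extracted from your gaps, since different windows may carry different kernels. So the order of your argument should be reversed. Once stationarity holds, $P_{X_1Y_0}$ is feasible, and the chain $C^*_f(Q)=C^*(Q)\le\frac1n I(X^n;Y^n|Y_0)\le\frac1n\sum_{t}I(X_t;Y_t|Y_{t-1})=I(X_1;Y_1|Y_0)\le C^*_f(Q)$ must be tight. Tightness at once makes $P_{X_1Y_0}$ a maximizer and forces $H(Y_t|Y_0^{t-1})=H(Y_t|Y_{t-1})$. That is, the output is Markov with kernel $P_{Y_1|Y_0}$ at every step. Both conditions then hold with $P^*_{XY'}:=P_{X_1Y_0}$, and no separate ``common maximizer'' argument is needed.
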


%\begin{theorem}\label{thm:necessary_sufficient}
%	Given a memoryless surjective channel $W$, when $\delta$ is sufficiently small,
%	\begin{align}
%		C(Q)=C_f(Q)
%	\end{align}
%	for a $W$-centered $\delta$-approximately memoryless POST channel $Q$ if and only if, for every positive integer $n$, there exists an input distribution $P^{(n)}_{X^n|Y_0}$ satisfying \eqref{eq:Markov}.
%	\end{theorem}
	\begin{IEEEproof}
	See Appendix \ref{app:necessary_sufficient}.
		\end{IEEEproof}
	
Theorem \ref{thm:necessary_sufficient} does not require the POST channel $Q$ to be indecomposable\footnote{In particular,  the limit  defining $C^*(Q)$ in  \eqref{eq:nonfeedback_capacity} always exists, regardless of the indecomposability of the POST channel
 $Q$  \cite[Theorem 4.6.1, p. 100]{Gallager68}.} and connected; these conditions are needed only for $C^*(Q)$ and $C^*_f(Q)$ to acquire their respective operational meanings via \eqref{eq:nonfeedback_capacity} and \eqref{eq:feedback_capacity}. Moreover, condition \eqref{eq:second_condition} is implied by condition \eqref{eq:first_condition} and is thus redundant if all channel transition matrices $\mathbb{Q}^{(c)}_0,\mathbb{Q}^{(c)}_1,\ldots,\mathbb{Q}^{(c)}_{|\mathcal{Y}|-1}$ have full column rank.

%limit exists even when $Q$ is decomposable. 

%\section{Gaussian Case}\label{sec:Gaussian}

%mentioned entropy constrained scalar quantizer for Gaussian source, it serves as an upper bound.

\section{Conclusion}\label{sec:conclusion}

We have shown that, for  approximately memoryless surjective POST channels, the feedback capacity coincides with the non-feedback capacity. The surjectivity condition imposed in our analysis serves primarily to streamline the proof, and  can be relaxed to some extent. Nevertheless, it remains unclear what the most general condition is under which feedback provides no capacity gain for POST channels whose input alphabet size is no smaller than the output alphabet size. In contrast, extending the result to settings where the output alphabet size is larger than the input alphabet size appears fundamentally more delicate. The main technical obstacle is to identify conditions ensuring that the distribution of the optimal output process induced under feedback lies within the convex hull of the columns of the non-feedback channel transition matrix, a requirement that does not generally hold when the output  dominates.

%Whether these conditions can be removed entirely, thereby establishing that feedback does not increase the capacity of all approximately memoryless POST channels, .

In our formulation, the transition matrices of different states of an approximately memoryless POST channel are modeled as perturbations of the transition matrix of a memoryless channel. It is therefore of significant interest to determine an effective perturbation bound under which our main result holds. While such a bound could in principle be derived by carefully tracing the inequalities throughout our proof and by strengthening Lemmas \ref{lem:uniform_convergence} and \ref{lem:refined_version} with explicit  convergence-rate estimates, the resulting expression would likely be cumbersome and offer limited conceptual insight. Developing a compact, interpretable, and operationally meaningful perturbation bound would be a valuable direction for future work.

The notion of approximately memoryless POST channels was introduced mainly for the convenience of stating our main theorem. In fact, the binary POST$(a,b)$ channel example in \cite{PAW14} demonstrates that the transition matrices need not be close at all for feedback to offer no advantage. This suggests the potential for more general formulations that extend well beyond the approximately memoryless setting. Pushing in this direction will likely require techniques that go beyond perturbation analysis. Moreover, although our focus has been on POST channels, similar phenomena may arise in other classes of channel models as well, and identifying such models would deepen our understanding of when feedback is or is not beneficial.

Finally, the core of our argument shows that, under suitable conditions, a closed-loop process can be effectively simulated by an open-loop process. This observation may have implications beyond information theory. Investigating analogous phenomena within the broader frameworks of partially observable Markov decision processes and reinforcement learning represents a promising and potentially impactful research direction.

% The very first letter is a 2 line initial drop letter followed
% by the rest of the first word in caps.
% 
% form to use if the first word consists of a single letter:
% \IEEEPARstart{A}{demo} file is ....
% 
% form to use if you need the single drop letter followed by
% normal text (unknown if ever used by IEEE):
% \IEEEPARstart{A}{}demo file is ....
% 
% Some journals put the first two words in caps:
% \IEEEPARstart{T}{his demo} file is ....
% 
% Here we have the typical use of a "T" for an initial drop letter
% and "HIS" in caps to complete the first word.

% if have a single appendix:
%\appendix[Proof of the Zonklar Equations]
% or
%\appendix  % for no appendix heading
% do not use \section anymore after \appendix, only \section*
% is possibly needed

% use appendices with more than one appendix
% then use \section to start each appendix
% you must declare a \section before using any
% \subsection or using \label (\appendices by itself
% starts a section numbered zero.)
%

\appendices
\section{Proof of Lemma \ref{lem:indecomposable_connected}}\label{app:indecomposable_connected}

A finite square matrix $\mathbb{Q}$ is called column-stochastic if all its entries are non-negative and each column sums to 
$1$. Furthermore, a column-stochastic matrix $\mathbb{Q}$ is said to be scrambling if 
\begin{align}
	\lambda(\mathbb{Q}):=1-\min\limits_{i,j}\sum\limits_{k}\min\{\mathbb{Q}(k,i),\mathbb{Q}(k,j)\}<1.
\end{align}
It is known \cite{Hajnal58} (see also \cite[Theorem 2.3]{CPW10}) that if all state transition matrices $\mathbb{Q}^{(s)}_0,\mathbb{Q}^{(s)}_0,\ldots,\mathbb{Q}^{(s)}_{|\mathcal{X}|-1}$ are scrambling, then $Q$ is an indecomposable POST channel. For a $W$-centered $\delta$-approximately memoryless POST channel $Q$, 
\begin{align}
	\lambda(\mathbb{Q}^{(s)}_x)&=1-\min\limits_{y',y''\in\mathcal{Y}}\sum\limits_{y\in\mathcal{Y}}\min\{Q(y|x,y'),Q(y|x,y'')\}\nonumber\\
	&\leq1-\sum\limits_{y\in\mathcal{Y}}\max\{W(y|x)-\delta,0\}.
\end{align}
Thus, whenever \eqref{eq:cond_indecomposable} holds, we have $\lambda(\mathbb{Q}^{(s)}_x)<1$ for all $x\in\mathcal{X}$, and consequently $Q$ is indecomposable.

Note that 
\begin{align}
	\max\limits_{x\in\mathcal{X}}\mathbb{Q}^{(s)}_x(y,y')\geq \max\limits_{x\in\mathcal{X}}W(y|x)-\delta.
\end{align}
Thus, whenever \eqref{eq:cond_connected} holds, for any $y,y'\in\mathcal{Y}$, there exists some $x\in\mathcal{X}$ such that $\mathbb{Q}^{(s)}_x(y,y')>0$, and consequently $Q$ is connected.

% you can choose not to have a title for an appendix
% if you want by leaving the argument blank

%\section{Proof of Lemma \ref{lem:limit}}\label{app:limit}

\section{Proof of Lemma \ref{lem:uniform_convergence}}\label{app:uniform_convergence}

Suppose, to the contrary, that $P^*_{XY'}$ does not converge uniformly to $P^{(W)}_{X}P^{(W)}_Y$. Then there exists a sequence of $W$-centered $\delta_k$-approximately memoryless POST channels $Q^{(k)}$, with $\delta_k\rightarrow 0$ as $k\rightarrow\infty$, and  corresponding maximizers $P^*_{XY'}(Q^{(k)})$ such that
\begin{align} P^*_{XY'}(Q^{(k)})\rightarrow P'_{XY'}\neq P^{(W)}_XP^{(W)}_Y
\end{align}	
as $k\rightarrow\infty$. Note that $P'_{XY'}$ satisfies the constraint \eqref{eq:feedback_constraint} when $Q$ degenerates to $W$.
Define the Markov kernel
\begin{align}
	P^{(k)}_{Y|Y'}(y|y'):=\sum\limits_{x\in\mathcal{X}}P^{(W)}_X(x)Q^{(k)}(y|x,y').
\end{align}
Because $P^{(W)}_X(x)>0$ for all $x\in\mathcal{X}$, and 
all channel transition matrices associated with $Q^{(k)}$ have full rank whenever $\delta_k$ satisfies \eqref{eq:full_rank}, we have $P^{(k)}_{Y|Y'}(y|y')>0$ for all $y,y'\in\mathcal{Y}$ and all sufficiently large $k$. Thus the resulting Markov chain is irreducible and admits a unique stationary distribution $P^{(k)}_{Y'}$ \cite[Theorems 5.3.3, 5.5.9, and 5.5.12]{Durrett2019}. Set
\begin{align} P^{(k)}_{XY'}:=P^{(W)}_XP^{(k)}_{Y'}.
\end{align}
A direct verification shows that
$P^{(k)}_{XY'}$ satisfies the constraint \eqref{eq:feedback_constraint} with respect to $Q^{(k)}$. Hence,
\begin{align}
	\left.I(X;Y|Y')\right|_{P_{XY'}=P^*_{XY'}(Q^{(k)}),Q=Q^{(k)}}\geq \left.I(X;Y|Y')\right|_{P_{XY'}=P^{(k)}_{XY'}(Q^{(k)}),Q=Q^{(k)}}.
\end{align}
Since 
\begin{align}
	&\lim\limits_{k\rightarrow\infty}\left.I(X;Y|Y')\right|_{P_{XY'}=P^*_{XY'}(Q^{(k)}),Q=Q^{(k)}}=\left.I(X;Y|Y')\right|_{P_{XY'}=P'_{XY'},Q=W},\\
	&\lim\limits_{k\rightarrow\infty}\left.I(X;Y|Y')\right|_{P_{XY'}=P^{(k)}_{XY'}(Q^{(k)}),Q=Q^{(k)}}=\left.I(X;Y|Y')\right|_{P_{XY'}=P^{(W)}_XP^{(W)}_Y,Q=W},
\end{align}
it follows that 
\begin{align}
	\left.I(X;Y|Y')\right|_{P_{XY'}=P'_{XY'},Q=W}\geq \left.I(X;Y|Y')\right|_{P_{XY'}=P^{(W)}_XP^{(W)}_Y,Q=W}.
\end{align}
Therefore, $P'_{XY'}$ is also a maximizer of the optimization problem in \eqref{eq:feedback_capacity}--\eqref{eq:feedback_constraint} when  $Q$ degenerates to $W$. This contradicts Lemma \ref{lem:limit}, thereby establishing the desired result.

It remains to prove the uniqueness of the maximizer $P^*_{XY'}$. 
To this end, we investigate the Hessian of $I(X;Y|Y')$ with respect to $P_{XY'}$.
First write
\begin{align}
	I(X;Y|Y')=H(Y|Y')-H(Y|X,Y').
\end{align}
Since the second term is linear in $P_{XY'}$, it suffices to focus on the first term when computing the Hessian. Note that
\begin{align}
	H(Y|Y')&=-\sum\limits_{y,y'\in\mathcal{Y}}P_{YY'}(y,y')\log P_{Y|Y'}(y|y')\nonumber\\
	&=-\sum\limits_{y,y'\in\mathcal{Y}}P_{YY'}(y,y')\log\frac{P_{YY'}(y,y')}{P_{Y'}(y')},
\end{align}
where 
\begin{align}
	P_{Y|Y'}(y|y'):=\sum\limits_{x\in\mathcal{X}}P_{X|Y'}(x|y')Q(y|x,y'),\\
	P_{YY'}(y,y'):=\sum\limits_{x\in\mathcal{X}}P_{XY'}(x,y')Q(y|x,y').
\end{align}
Therefore,
\begin{align}
	\frac{\partial H(Y|Y')}{
		\partial P_{XY'}(x,y')}=-\sum\limits_{y\in\mathcal{Y}}Q(y|x,y')\log P_{YY'}(y,y')+\log P_{Y'}(y').
\end{align}
Moreover, 
\begin{align}
	\frac{\partial^2 H(Y|Y')}{
		\partial P_{XY'}(x,y')\partial P_{XY'}(\tilde{x},\tilde{y}')}=\begin{cases}
		\frac{1}{P_{Y'}(y')}\left(1-\sum\limits_{y\in\mathcal{Y}}\frac{Q(y|x,y')Q(y|\tilde{x},y')}{P_{Y|Y'}(y|y')}\right), & y'=\tilde{y}',\\
		0, & y'\neq\tilde{y}'.
	\end{cases}
\end{align}
Hence, the Hessian $\mathbb{H}$ of $I(X;Y|Y')$ 
can be written as a block-diagonal matrix with  the $y'$-th diagonal block given by
\begin{align}
	\mathbb{H}_{y'}:=\frac{1}{P_{Y'}(y')}\left(\mathbbm{1}\mathbbm{1}^T-(\mathbb{Q}^{(c)}_{y'})^T\mathbb{D}_{y'}\mathbb{Q}^{(c)}_{y'}\right),
\end{align}
where $\mathbb{D}_{y'}:=\mathrm{diag}(\frac{1}{P_{Y|Y'}(y|y')})_{y\in\mathcal{Y}}$ for $y'\in\mathcal{Y}$. Let $\mathbbm{v}:=(\mathbbm{v}^T_0,\mathbbm{v}^T_1,\ldots,\mathbbm{v}^T_{|\mathcal{Y}|-1})$ be a column vector satisfying $\mathbbm{1}^T\mathbbm{v}=0$, where
$\mathbbm{v}_{y'}$ is a $|\mathcal{X}|$-dimensional column vector for $y'\in\mathcal{Y}$. We have
\begin{align}
	\mathbbm{v}^T\mathbb{H}\mathbbm{v}=\sum\limits_{y'\in\mathcal{Y}}	\mathbbm{v}^T_{y'}\mathbb{H}_{y'}\mathbbm{v}_{y'}.\label{eq:decompose}
\end{align} 
Let $\mathbbm{p}_{y'}:=(P_{X|Y'}(x|y'))_{x\in\mathcal{X}}^T$ and $d_{y'}:=\mathbbm{1}^T\mathbbm{v}_{y'}$ for $y'\in\mathcal{Y}$. We now show that
\begin{align}
	\mathbbm{v}^T_{y'}\mathbb{H}_{y'}\mathbbm{v}_{y'}=-\frac{1}{P_{Y'}(y')}(\mathbbm{v}_{y'}-d_{y'}\mathbbm{p}_{y'})^T(\mathbb{Q}^{(c)}_{y'})^T\mathbb{D}_{y'}\mathbb{Q}^{(c)}_{y'}(\mathbbm{v}_{y'}-d_{y'}\mathbbm{p}_{y'}).\label{eq:perspective}
\end{align}
To see this, expand
\begin{align}
	&-\frac{1}{P_{Y'}(y')}(\mathbbm{v}_{y'}-d_{y'}\mathbbm{p}_{y'})^T(\mathbb{Q}^{(c)}_{y'})^T\mathbb{D}_{y'}\mathbb{Q}^{(c)}_{y'}(\mathbbm{v}_{y'}-d_{y'}\mathbbm{p}_{y'})\nonumber\\
	&=-\frac{1}{P_{Y'}(y')}\left(\mathbbm{v}_{y'}^T(\mathbb{Q}^{(c)}_{y'})^T\mathbb{D}_{y'}\mathbb{Q}^{(c)}_{y'}\mathbbm{v}_{y'}-2d_{y'}\mathbbm{p}_{y'}^T(\mathbb{Q}^{(c)}_{y'})^T\mathbb{D}_{y'}\mathbb{Q}^{(c)}_{y'}\mathbbm{v}_{y'}+d^2_{y'}\mathbbm{p}^T_{y'}(\mathbb{Q}^{(c)}_{y'})^T\mathbb{D}_{y'}\mathbb{Q}^{(c)}_{y'}\mathbbm{p}_{y'}\right).\label{eq:expansion}
\end{align}
Since
\begin{align}
	\sum\limits_{x\in\mathcal{X}}P_{X|Y'}(x|y')\sum\limits_{y\in\mathcal{Y}}\frac{Q(y|x,y')Q(y|\tilde{x},y')}{P_{Y|Y'}(y|y')}&=\sum\limits_{y\in\mathcal{Y}}\left(\sum\limits_{x\in\mathcal{X}}P_{X|Y'}(x|y')Q(y|x,y')\right)\frac{Q(y|\tilde{x},y')}{P_{Y|Y'}(y|y')}\nonumber\\
	&=\sum\limits_{y\in\mathcal{Y}}Q(y|\tilde{x},y')\nonumber\\
	&=1,
\end{align}
it follows that
\begin{align}
	\mathbbm{p}_{y'}^T(\mathbb{Q}^{(c)}_{y'})^T\mathbb{D}_{y'}\mathbb{Q}^{(c)}_{y'}=\mathbbm{1}^T.
\end{align}
As a consequence,
\begin{align}
	&\mathbbm{p}_{y'}^T(\mathbb{Q}^{(c)}_{y'})^T\mathbb{D}_{y'}\mathbb{Q}^{(c)}_{y'}\mathbbm{v}_{y'}=d_{y'},\label{eq:subst1}\\
	&\mathbbm{p}^T_{y'}(\mathbb{Q}^{(c)}_{y'})^T\mathbb{D}_{y'}\mathbb{Q}^{(c)}_{y'}\mathbbm{p}_{y'}=1.\label{eq:subst2}
\end{align}
Substituting \eqref{eq:subst1} and \eqref{eq:subst2} into \eqref{eq:expansion} yields
\begin{align}
	&-\frac{1}{P_{Y'}(y')}(\mathbbm{v}_{y'}-d_{y'}\mathbbm{p}_{y'})^T(\mathbb{Q}^{(c)}_{y'})^T\mathbb{D}_{y'}\mathbb{Q}^{(c)}_{y'}(\mathbbm{v}_{y'}-d_{y'}\mathbbm{p}_{y'})\nonumber\\
	&=-\frac{1}{P_{Y'}(y')}\left(\mathbbm{v}_{y'}^T(\mathbb{Q}^{(c)}_{y'})^T\mathbb{D}_{y'}\mathbb{Q}^{(c)}_{y'}\mathbbm{v}_{y'}-d^2_{y'}\right)\nonumber\\
	&=-\frac{1}{P_{Y'}(y')}\left(\mathbbm{v}_{y'}^T(\mathbb{Q}^{(c)}_{y'})^T\mathbb{D}_{y'}\mathbb{Q}^{(c)}_{y'}\mathbbm{v}_{y'}-\mathbbm{v}^T_{y'}\mathbbm{1}\mathbbm{1}^T\mathbbm{v}_{y'}\right)\nonumber\\
	&=\mathbbm{v}^T_{y'}\mathbb{H}_{y'}\mathbbm{v}_{y'},
\end{align}
which proves \eqref{eq:perspective}.
It follows immediately from \eqref{eq:perspective} that
\begin{align}
	\mathbbm{v}^T_{y'}\mathbb{H}_{y'}\mathbbm{v}_{y'}\leq 0.\label{eq:negative_definite}
\end{align}
Moreover, since $\mathbb{Q}^{(c)}_{y'}$ has full rank,  equality is achieved  in \eqref{eq:negative_definite} if and only if
\begin{align} \mathbbm{v}_{y'}=d_{y'}\mathbbm{p}_{y'}.
\end{align}

Now suppose, to the contrary, that the optimization problem in \eqref{eq:feedback_capacity}--\eqref{eq:feedback_constraint} has two distinct
maximizers $P^{(0)}_{XY'}$ and $P^{(1)}_{XY'}$. Set $\mathbbm{v}_{y'}=(P^{(1)}_{XY'}(x,y')-P^{(0)}_{XY'}(x,y'))^T_{x\in\mathcal{X}}$ for $y'\in\mathcal{Y}$, and let $P^{(\alpha)}_{XY'}:=(1-\alpha)P^{(0)}_{XY'}+\alpha P^{(1)}_{XY'}$ with $\alpha\in[0,1]$. Clearly, $P^{(\alpha)}_{XY'}$ satisfies \eqref{eq:feedback_constraint}. It follows by \eqref{eq:decompose} and \eqref{eq:negative_definite}  that
\begin{align}
	\left.\mathbbm{v}^T\mathbbm{H}\mathbbm{v}\right|_{P_{XY'}=P^{(\alpha)}_{XY'}}\leq 0,\quad\alpha\in[0,1].\label{eq:lambda}
\end{align}
Moreover,  \eqref{eq:lambda} must in fact hold with equality; otherwise, there would exist some $\alpha\in(0,1)$  for which $P^{(\alpha)}_{XY'}$ yields a strictly larger value of $I(X;Y|Y')$ than both $P^{(0)}_{XY'}$ and $P^{(1)}_{XY'}$. As a consequence,
\begin{align}
	\mathbbm{v}_{y'}=d_{y'}\mathbbm{p}^{(\alpha)}_{y'},\label{eq:lambda01}
\end{align}
where $\mathbbm{p}^{(\alpha)}_{y'}:=(P^{(\alpha)}_{X|Y'}(x|y))^T_{x\in\mathcal{X}}$ for $y'\in\mathcal{Y}$. Setting $\alpha=0$ and $\alpha=1$ in \eqref{eq:lambda01} gives
$P^{(0)}_{X|Y'}=P^{(1)}_{X|Y'}$, which further implies $P^{(0)}_{Y|Y'}=P^{(1)}_{Y|Y'}$. Since $P^{(0)}_{XY'}$ and $P^{(1)}_{XY'}$ converge to $P^{(W)}_XP^{(W)}_Y$ as $\delta\rightarrow 0$, it follows that $P^{(0)}_{X|Y'}(x|y')=P^{(1)}_{X|Y'}(x|y')>0$ for all $x\in\mathcal{X}$ and $y'\in\mathcal{Y}$ when $\delta$ is sufficiently small.
Note that all channel transition matrices $\mathbb{Q}^{(c)}_0, \mathbb{Q}^{(c)}_1, \ldots, \mathbb{Q}^{(c)}_{|\mathcal{Y}|-1}$ have full rank whenever \eqref{eq:full_rank} holds. Therefore, we also have $P^{(0)}_{Y|Y'}(y|y')=P^{(1)}_{Y|Y'}(y|y')>0$ for all  $y,y'\in\mathcal{Y}$ when $\delta$ is sufficiently small. As a Markov kernel, $P^{(0)}_{Y|Y'}$ (equivalently, $P^{(1)}_{Y|Y'}$) is thus irreducible and hence admits a unique stationary distribution $P^{(k)}_{Y'}$ \cite[Theorems 5.3.3, 5.5.9, and 5.5.12]{Durrett2019}. The constraint \eqref{eq:feedback_constraint} therefore forces $P^{(0)}_{Y'}=P^{(1)}_{Y'}$, and consequently $P^{(0)}_{XY'}=P^{(1)}_{XY'}$, which contradicts the assumption that the two maximizers are distinct.

The uniqueness of the maximizer $P^*_{XY'}$ can also be established via an information-theoretic argument. Suppose, to the contrary, that the optimization problem in \eqref{eq:feedback_capacity}--\eqref{eq:feedback_constraint} admits two distinct
maximizers $P^{(0)}_{XY'}$ and $P^{(1)}_{XY'}$.  Define a new set of random variables $(\Theta,X(\Theta),Y(\Theta),Y'(\Theta))$ with the joint distribution 
\begin{align}
	P_{\Theta X(\Theta)Y(\Theta)Y'(\Theta)}(\theta,x,y,y'):=\frac{1}{2}P^{(\theta)}_{XY'}(x,y')Q(y|x,y'),
\end{align}
where $\theta\in\{0,1\}$. It can be verified that for $\theta=0,1$,
\begin{align}
	P_{X(\theta)Y(\theta)Y'(\theta)}(x,y,y')&=P^{(\theta)}_{XYY'}(x,y,y')\\
	&:=P^{(\theta)}_{XY'}(x,y')Q(y|x,y').
\end{align}
We have
\begin{align}
	I(X(\Theta);Y(\Theta)|Y'(\Theta))&=I(\Theta,X(\Theta);Y(\Theta)|Y'(\Theta))\nonumber\\
	&\geq I(X(\Theta);Y(\Theta)|Y'(\Theta),\Theta)\nonumber\\
	&=\frac{1}{2}I(X(0);Y'(0)|Y'(0))+\frac{1}{2}I(X(1);Y'(1)|Y'(1)).
\end{align}
On the other hand, since $P_{X(\Theta)Y'(\Theta)}$ also satisfy the constraint \eqref{eq:feedback_constraint}, it follows that 
\begin{align}
I(X(\Theta);Y(\Theta)|Y'(\Theta))\leq\frac{1}{2}I(X(0);Y'(0)|Y'(0))+\frac{1}{2}I(X(1);Y'(1)|Y'(1)).
\end{align}
Therefore,
\begin{align}
I(X(\Theta);Y(\Theta)|Y'(\Theta))=\frac{1}{2}I(X(0);Y'(0)|Y'(0))+\frac{1}{2}I(X(1);Y'(1)|Y'(1)),
\end{align}
which further implies
\begin{align}
I(\Theta,X(\Theta);Y(\Theta)|Y'(\Theta))= I(X(\Theta);Y(\Theta)|Y'(\Theta),\Theta).
\end{align}
As a consequence, $\Theta$ and $Y(\Theta)$ are conditionally independent given $Y'(\Theta)$, i.e., $P_{Y(0)|Y'(0)}=P_{Y(1)|Y'(1)}$, or equivalently, $P^{(0)}_{Y|Y'}=P^{(1)}_{Y|Y'}$.
Combining this with the constraint \eqref{eq:feedback_constraint}, we can
conclude that
$P^{(0)}_{X|Y'}=P^{(1)}_{X|Y'}$ (using the full rankness of $\mathbb{Q}^{(c)}_0, \mathbb{Q}^{(c)}_1, \ldots, \mathbb{Q}^{(c)}_{|\mathcal{Y}|-1}$) and $P^{(0)}_{Y'}=P^{{(1)}}_{Y'}$ (using the irreducibility of the Markov kernels $P^{(0)}_{Y|Y'}$ and $P^{(1)}_{Y|Y'}$). Hence, $P^{(0)}_{XY'}=P^{(1)}_{XY'}$, yielding a contradiction.

It is worth mentioning that, although the approach based on the analysis of the Hessian matrix is technically more involved than the information-theoretic argument, it provides additional insight into the problem. In particular, the Hessian-based analysis reveals the strong concavity of $I(X;Y|Y')$ with respect to $P_{XY'}$ over a certain subspace,  a property that is not directly apparent from the information-theoretic perspective. Such structural information is likely to be useful for establishing a strengthened version of Lemma \ref{lem:uniform_convergence} that includes an explicit estimate of the convergence rate.

%comment on the advantage of the Hessian matrix approach.

\section{Proof of Theorem \ref{thm:necessary_sufficient}}\label{app:necessary_sufficient}

The following lemma\footnote{The conditions in \eqref{eq:c_2} are partially redundant. When $n\geq 2$, the joint stationarity of $P_{X_tY_{t-1}}$ implies the marginal stationarity of
$P_{Y_{t}}$. Indeed, the joint stationary implies
 $P_{Y_{t}}=P_{Y_0}$ for $t=1,2,\ldots,n-1$, while the final case $P_{Y_{n}}=P_{Y_{n-1}}$ is a direct consequence of $P_{X_nY_{n-1}}=P_{X_{n-1}Y_{n-2}}$ and the time-invariant property of the POST channel. When $n=1$, however, the joint  stationarity is vacuous, and  $P_{Y_1}=P_{Y_0}$ must be established separately. We retain the current form of \eqref{eq:c_2} primarily for ease of application.
} plays an essential role in proving Theorem \ref{thm:necessary_sufficient} and is of interest in its own right.
\begin{lemma}\label{lem:stationary}
	For every positive integer $n$, there exists a joint input-initial-state distribution $P_{X^nY_0}$ such that the induced 
	\begin{align}
		P_{X^nY^nY_0}(x^n,y^n,y_0):=P_{X^nY_0}(x^n,y_0)\left(\prod\limits_{t=1}^nQ(y_t|x_t,y_{t-1})\right)\label{eq:c_1}
	\end{align}
	satisfies 
	\begin{align}
		\frac{1}{n}I(X^n;Y^n|Y_0)\geq C^*(Q),\label{eq:stationary_upperbound}
	\end{align}
	and
	\begin{align}
		P_{X_tY_{t-1}}=P_{X_1Y_{0}},\quad P_{Y_t}=P_{Y_0},\label{eq:c_2}
	\end{align}
	for $t=1,2,\ldots,n$.
	\end{lemma}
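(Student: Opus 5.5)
The plan is to construct $P_{X^nY_0}$ as a random-shift window of a long, nearly optimal non-feedback input block, and then remove the approximation errors with a compactness argument.

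\textbf{Step 1 (long block).} Fix $n$. For each $N\gg n$, take a maximizer $P^{(N)}_{X^NY_0}$ of \eqref{eq:C_n(Q)} with $n$ replaced by $N$, so that $\frac1N I(X^N;Y^N|Y_0)=C_N(Q)$. For a window start $\tau\in\{0,1,\ldots,N-n\}$, write $X_w:=X_{\tau+1}^{\tau+n}$ and $Y_w:=Y_{\tau+1}^{\tau+n}$.

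\textbf{Step 2 (window lower bound).} The POST structure gives the Markov chain
\[
(X_{\mathrm{other}},Y_0^{\tau-1})\to(X_w,Y_\tau)\to Y_w .
\]
Indeed, given $(X_w,Y_\tau)$, the outputs $Y_w$ are generated by the kernel $Q$ from $X_w$ and $Y_\tau$ alone. Hence $H(Y_w|X_w,Y_\tau)=H(Y_w|X^N,Y_0^\tau)$. Together with $H(Y_w|Y_\tau)\ge H(Y_w|Y_0^\tau)$ this yields
\[
I(X_w;Y_w|Y_\tau)\ \ge\ I(X^N;Y_w|Y_0^\tau).
\]

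\textbf{Step 3 (summing over offsets).} For each offset $s\in\{0,\ldots,n-1\}$, partition $\{1,\ldots,N\}$ into the consecutive length-$n$ windows starting at $s+1,s+n+1,\ldots$, plus at most two boundary pieces of length less than $n$. By the chain rule, $I(X^N;Y^N|Y_0)$ equals the sum of the terms $I(X^N;Y_w|Y_0^\tau)$ over these pieces. The boundary pieces contribute at most $2n\log|\mathcal{Y}|$. Averaging over $s$ and applying Step 2 gives
\[
\frac{1}{|\mathcal{T}|}\sum_{\tau\in\mathcal{T}}\frac1n I(X_{\tau+1}^{\tau+n};Y_{\tau+1}^{\tau+n}|Y_\tau)\ \ge\ C_N(Q)-O(n/N),
\]
where $\mathcal{T}$ is the multiset of full-window starts, which is essentially uniform on $\{0,\ldots,N-n\}$.

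\textbf{Step 4 (mixing over the shift).} Define $P^{(N)}_{X^nY_0}$ as the mixture over a uniform $\tau\in\mathcal{T}$ of the law of $(X_{\tau+1}^{\tau+n},Y_\tau)$. Under this mixture, $\tau\to(X^n,Y_0)\to Y^n$ is Markov, because the channel acts identically for every $\tau$. Therefore
\[
I(X^n;Y^n|Y_0)=I(X^n,\tau;Y^n|Y_0)\ge I(X^n;Y^n|Y_0,\tau),
\]
and the right-hand side is exactly the average in Step 3.

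\textbf{Step 5 (approximate stationarity).} Under the mixture, $P_{X_tY_{t-1}}$ for $t=1,\ldots,n$ is an average of $P_{X_{\tau+t}Y_{\tau+t-1}}$ over $\tau$. Comparing two different values of $t$ amounts to shifting the averaging range by at most $n$, so
\[
\|P_{X_tY_{t-1}}-P_{X_1Y_0}\|_1=O(n/N).
\]
The same holds for $\|P_{Y_t}-P_{Y_0}\|_1$.

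\textbf{Step 6 (limit $N\to\infty$; main obstacle).} Exactness is the main obstacle: the conditions \eqref{eq:c_2} must hold exactly, and \eqref{eq:stationary_upperbound} must hold with $C^*(Q)$ rather than $C_N(Q)$. I would handle this by compactness. The distributions $P^{(N)}_{X^nY_0}$ live on a finite simplex, so a subsequence converges to some $P_{X^nY_0}$. The induced $P_{X^nY^nY_0}$ from \eqref{eq:c_1} depends continuously on $P_{X^nY_0}$, and so does $I(X^n;Y^n|Y_0)$. The $O(n/N)$ stationarity defects and the $O(n/N)$ loss in Step 3 then vanish in the limit. Since $C_N(Q)\to C^*(Q)$, the limit law satisfies \eqref{eq:c_2} exactly and $\frac1n I(X^n;Y^n|Y_0)\ge C^*(Q)$. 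The delicate points to verify carefully are the Markov property in Step 2 and the boundary bookkeeping in Step 3.
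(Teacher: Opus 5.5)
Your proposal is correct and follows essentially the same route as the paper. Both arguments take a long near-optimal block, bound each length-$n$ window via the POST Markov structure by $I(X_w;Y_w|Y_\tau)$, average over the $n$ offsets, mix over a uniform shift (the paper's $\Theta$, with $I(\cdot|\cdot,\Theta)\le I(\cdot|\cdot)$), and control the stationarity defects in total variation before taking limits. The differences are cosmetic: you start from an exact maximizer of $C_N(Q)$ with general $N$ rather than an $\epsilon$-optimal block of length $kn+n-1$, and you spell out the compactness/continuity step that the paper compresses into ``sending $k\to\infty$ then $\epsilon\to0$''.
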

	\begin{IEEEproof}
	In view of \eqref{eq:nonfeedback_capacity}, given any positive integer $n$ and any $\epsilon>0$, there exists, for every sufficiently large $k$,
	a joint input-initial-state distribution $P_{X^{kn+n-1}Y_0}$ such that the induced mutual information across the POST channel $Q$ satisfies
	\begin{align}
		\frac{1}{kn+n-1}I(X^{kn+n-1};Y^{kn+n-1}|Y_0)\geq C^*(Q)-\epsilon.\label{eq:epsilon_gap}
	\end{align}
	For $i=1,2,\ldots,n$, we have
	\begin{align}
		I(X^{kn+n-1};Y^{kn+n-1}|Y_0)&=H(Y^{kn+n-1}|Y_0)-H(Y^{kn+n-1}|X^{kn+n-1},Y_0)\nonumber\\
		&=H(Y^{i-1}|Y_0)-H(Y^{i-1}|X^{kn+n-1},Y_0)\nonumber\\
		&\quad+\sum\limits_{j=1}^k\left(H(Y^{jn+i-1}_{(j-1)n+i}|Y^{(j-1)n+i-1}_0)-H(Y^{jn+i-1}_{(j-1)n+i}|X^{kn+n-1},Y^{(j-1)n+i-1}_0)\right)\nonumber\\
		&\quad+H(Y^{kn+n-1}_{kn+i}|Y^{kn+i-1}_0)-H(Y^{kn+n-1}_{kn+i}|X^{kn+n-1},Y^{kn+i-1}_0)\nonumber\\
		&=H(Y^{i-1}|Y_0)-H(Y^{i-1}|X^{i-1},Y_0)\nonumber\\
		&\quad+\sum\limits_{j=1}^k\left(H(Y^{jn+i-1}_{(j-1)n+i}|Y^{(j-1)n+i-1}_0)-H(Y^{jn+i-1}_{(j-1)n+i}|X^{jn+i-1}_{(j-1)n+i},Y_{(j-1)n+i-1})\right)\nonumber\\
		&\quad+H(Y^{kn+n-1}_{kn+i}|Y^{kn+i-1}_0)-H(Y^{kn+n-1}_{kn+i}|X^{kn+n-1}_{kn+i},Y_{kn+i-1})\nonumber\\
		& \leq H(Y^{i-1}|Y_0)-H(Y^{i-1}|X^{i-1},Y_0)\nonumber\\
		&\quad+\sum\limits_{j=1}^k\left(H(Y^{jn+i-1}_{(j-1)n+i}|Y_{(j-1)n+i-1})-H(Y^{jn+i-1}_{(j-1)n+i}|X^{jn+i-1}_{(j-1)n+i},Y_{(j-1)n+i-1})\right)\nonumber\\
		&\quad+H(Y^{kn+n-1}_{kn+i}|Y_{kn+i-1})-H(Y^{kn+n-1}_{kn+i}|X^{kn+n-1}_{kn+i},Y_{kn+i-1})\nonumber\\
		&=H(X^{i-1};Y^{i-1}|Y_0)+\sum\limits_{j=1}^kI(X^{jn+i-1}_{(j-1)n+i};Y^{jn+i-1}_{(j-1)n+i}|Y_{(j-1)n+i-1})\nonumber\\
		&\quad+I(X^{kn+n-1}_{kn+i};Y^{kn+n-1}_{kn+i}|Y_{kn+i-1})\nonumber\\
		&\leq(n-1)\log|\mathcal{X}|+\sum\limits_{j=1}^kI(X^{jn+i-1}_{(j-1)n+i};Y^{jn+i-1}_{(j-1)n+i}|Y_{(j-1)n+i-1}).
	\end{align}
	Therefore,
	\begin{align}
		nI(X^{kn+n-1};Y^{kn+n-1}|Y_0)\leq n(n-1)\log|\mathcal{X}|+\sum\limits_{i=1}^{n}\sum\limits_{j=1}^kI(X^{jn+i-1}_{(j-1)n+i};Y^{jn+i-1}_{(j-1)n+i}|Y_{(j-1)n+i-1}),
	\end{align}
	which  implies
	\begin{align}
		&\frac{1}{kn+n-1}I(X^{kn+n-1};Y^{kn+n-1}|Y_0)\nonumber\\
		&\leq\frac{n-1}{kn+n-1}\log|\mathcal{X}|+\frac{1}{n(kn+n-1)}\sum\limits_{i=1}^{n}\sum\limits_{j=1}^kI(X^{jn+i-1}_{(j-1)n+i};Y^{jn+i-1}_{(j-1)n+i}|Y_{(j-1)n+i-1}).\label{eq:besubst}
	\end{align}
	Let $\Theta$ be uniformly distributed over $\{0,1,\ldots,kn-1\}$ and independent of $(X^{kn+n-1},Y^{kn+n-1}_0)$. Define $X_i(\Theta):=X_{\Theta+i}$ for $i=1,2,\ldots,n$, and $Y_i(\Theta):=Y_{\Theta+i}$ for $i=0,1,\ldots,n$. Then
	\begin{align}
		\sum\limits_{i=1}^{n}\sum\limits_{j=1}^kI(X^{jn+i-1}_{(j-1)n+i};Y^{jn+i-1}_{(j-1)n+i}|Y_{(j-1)n+i-1})&=\sum\limits_{i=1}^{n}\sum\limits_{j=1}^kI(X^n(\Theta);Y^n(\Theta)|Y_0(\Theta),\Theta=(j-1)n+i-1)\nonumber\\
		&=knI(X^n(\Theta);Y^n(\Theta)|Y_0(\Theta),\Theta)\nonumber\\
		&\leq knI(\Theta,X^n(\Theta);Y^n(\Theta)|Y_0(\Theta))\nonumber\\
		&=knI(X^n(\Theta);Y^n(\Theta)|Y_0(\Theta)).\label{eq:subst}
	\end{align}
	Substituting \eqref{eq:subst} into \eqref{eq:besubst} gives
	\begin{align}
		\frac{1}{kn+n-1}I(X^{kn+n-1};Y^{kn+n-1}|Y_0)\leq\frac{n-1}{kn+n-1}\log|\mathcal{X}|+\frac{k}{kn+n-1}I(X^n(\Theta);Y^n(\Theta)|Y_0(\Theta)).\label{eq:k_limit}
	\end{align}
	Note that the joint distribution of $(X^n(\Theta),Y^n(\Theta),Y_0(\Theta))$ factors as
	\begin{align}
		P_{X^n(\Theta)Y^n(\Theta)Y_0(\Theta)}(x^n,y^n,y_0)=P_{X^n(\Theta)Y_0(\Theta)}(x^n,y_0)\left(\prod\limits_{t=1}^nQ(y_t|x_t,y_{t-1})\right),\label{eq:Pn_factor}
	\end{align}
	where
	\begin{align}
		P_{X^n(\Theta)Y_0(\Theta)}(x^n,y_0)=\frac{1}{nk}\sum\limits_{i=1}^n\sum\limits_{j=1}^kP_{X_{(j-1)n+i}^{jn+i-1}Y_{(j-1)n+i-1}}(x^n,y_0).
	\end{align}
	Moreover, for $t=1,2,\ldots,n$,
	\begin{align}
		&d_{\mathrm{TV}}(P_{X_t(\Theta)Y_{t-1}(\Theta)},P_{X_{1}(\Theta)Y_{0}(\Theta)})\nonumber\\
		&=\frac{1}{2}\sum\limits_{x\in\mathcal{X},y\in\mathcal{Y}}\left|P_{X_t(\Theta)Y_{t-1}(\Theta)}(x,y)-P_{X_{1}(\Theta)Y_{0}(\Theta)}(x,y)\right|\nonumber\\
		&=\frac{1}{2nk}\sum\limits_{x\in\mathcal{X},y\in\mathcal{Y}}\left|\sum\limits_{i=1}^n\sum\limits_{j=1}^kP_{X_{(j-1)n+i+t-1}Y_{(j-1)n+i+t-2}}(x,y)-\sum\limits_{i=1}^n\sum\limits_{j=1}^kP_{X_{(j-1)n+i}Y_{(j-1)n+i-1}}(x,y)\right|\nonumber\\
		&=\frac{1}{2nk}\sum\limits_{x\in\mathcal{X},y\in\mathcal{Y}}\left|\sum\limits_{\ell=kn+1}^{kn+t-1}P_{X_{\ell}Y_{\ell-1}}(x,y)-\sum\limits_{\ell=1}^{t-1}P_{X_{\ell}Y_{\ell-1}}(x,y)\right|\nonumber\\
		&\leq\frac{t-1}{nk}|\mathcal{X}||\mathcal{Y}|\nonumber\\
		&\leq\frac{n-1}{nk}|\mathcal{X}||\mathcal{Y}|\label{eq:TV}
	\end{align}
	and
	\begin{align}
	&d_{\mathrm{TV}}(P_{Y_{t}(\Theta)},P_{Y_{0}(\Theta)})\nonumber\\
	&=\frac{1}{2}\sum\limits_{y\in\mathcal{Y}}\left|P_{Y_{t}(\Theta)}(y)-P_{Y_{0}(\Theta)}(y)\right|\nonumber\\
	&=\frac{1}{2nk}\sum\limits_{y\in\mathcal{Y}}\left|\sum\limits_{i=1}^n\sum\limits_{j=1}^kP_{Y_{(j-1)n+i+t-1}}(y)-\sum\limits_{i=1}^n\sum\limits_{j=1}^kP_{Y_{(j-1)n+i-1}}(y)\right|\nonumber\\
	&=\frac{1}{2nk}\sum\limits_{y\in\mathcal{Y}}\left|\sum\limits_{\ell=kn}^{kn+t-1}P_{Y_{\ell}}(y)-\sum\limits_{\ell=0}^{t-1}P_{Y_{\ell}}(y)\right|\nonumber\\
	&\leq\frac{t}{nk}|\mathcal{Y}|\nonumber\\
	&\leq\frac{1}{k}|\mathcal{Y}|,\label{eq:TV2}
	\end{align}
	where $d_{\mathrm{TV}}(\cdot,\cdot)$ denotes the total variation distance.
	In view of \eqref{eq:epsilon_gap}, \eqref{eq:k_limit}, \eqref{eq:Pn_factor},  \eqref{eq:TV}, and \eqref{eq:TV2}, sending $k\rightarrow\infty$ then $\epsilon\rightarrow0$ proves the desired result.
		\end{IEEEproof}

With Lemma \ref{lem:stationary} in hand, we proceed to the proof of Theorem \ref{thm:necessary_sufficient}. First, consider the `if' direction.
 Given a joint input-initial-state distribution $P_{X^nY_0}$ satisfying \eqref{eq:first_condition} and \eqref{eq:second_condition}, 
we have
\begin{align}
	\frac{1}{n}I(X^n;Y^n|Y_0)&=\frac{1}{n}H(Y^n|Y_0)-\frac{1}{n}H(Y^n|X^n,Y_0)\nonumber\\
	&=\frac{1}{n}\sum\limits_{t=1}^nH(Y_t|Y^{t-1}_0)-\frac{1}{n}\sum\limits_{t=1}^nH(Y_t|X^n,Y^{t-1}_0)\nonumber\\
	&=\frac{1}{n}\sum\limits_{t=1}^nH(Y_t|Y^{t-1}_0)-\frac{1}{n}\sum\limits_{t=1}^nH(Y_t|X_t,Y_{t-1})\nonumber\\
	&\stackrel{(a)}{=}\frac{1}{n}\sum\limits_{t=1}^nH(Y_t|Y_{t-1})-\frac{1}{n}\sum\limits_{t=1}^nH(Y_t|X_t,Y_{t-1})\nonumber\\
	&=\frac{1}{n}\sum\limits_{t=1}^nI(X_t;Y_t|Y_{t-1})\nonumber\\
	&\stackrel{(b)}{=}C^*_f(Q),\label{eq:hold_for_all_n}
\end{align}
where ($a$) and ($b$) are due to \eqref{eq:first_condition} and \eqref{eq:second_condition}, respectively. Since an input-initial-state distribution satisfying \eqref{eq:hold_for_all_n} exists for every positive integer
 $n$, it follows that 
\begin{align}
	C^*(Q)\geq C^*_f(Q).
\end{align}
On the other hand, invoking Lemma \ref{lem:stationary} with $n=1$ yields
\begin{align}
	C^*(Q)\leq C^*_f(Q).
\end{align}
This completes the proof of the `if' direction.

	%It suffices to prove the ``only if" part. When $\delta$ is small enough, the POST channel $Q$ is both indecomposable and connected; therefore, the characterizations of the non-feedback capacity in \eqref{eq:nonfeedback_capacity} and the feedback capacity in \eqref{eq:feedback_capacity} apply.

Next, consider `only if' direction. 
By Lemma \ref{lem:stationary}, for every positive integer $n$,
 there exists a joint input-initial-state distribution $P_{X^nY_0}$ such that the induced 
	$P_{X^nY^nY_0}$
satisfies \eqref{eq:stationary_upperbound}
and \eqref{eq:c_2}.
If $C^*(Q)=C^*_f(Q)$, then 
\begin{align}
	\frac{1}{n}I(X^n;Y^n|Y_0)\geq  C^*_f(Q).\label{eq:comb_1}
\end{align}
Note that 
\begin{align}
\frac{1}{n}I(X^n;Y^n|Y_0)&=\frac{1}{n}H(Y^n|Y_0)-\frac{1}{n}H(Y^n|X^n,Y_0)\nonumber\\
&=\frac{1}{n}\sum\limits_{t=1}^nH(Y_t|Y^{t-1}_0)-\frac{1}{n}\sum\limits_{t=1}^nH(Y_t|X^n,Y^{t-1}_0)\nonumber\\
&=\frac{1}{n}\sum\limits_{t=1}^nH(Y_t|Y^{t-1}_0)-\frac{1}{n}\sum\limits_{t=1}^nH(Y_t|X_t,Y_{t-1})\nonumber\\
&\leq\frac{1}{n}\sum\limits_{t=1}^nH(Y_t|Y_{t-1})-\frac{1}{n}\sum\limits_{t=1}^nH(Y_t|X_t,Y_{t-1})\nonumber\\
&=\frac{1}{n}\sum\limits_{t=1}^nI(X_t;Y_t|Y_{t-1})\nonumber\\
&\stackrel{(c)}{=}I(X_1;Y_1|Y_0),\label{eq:comb_2}
\end{align}
where ($c$) is due to \eqref{eq:c_2}.
Combining \eqref{eq:comb_1} and \eqref{eq:comb_2} yields
\begin{align}
	I(X_1;Y_1|Y_0)\geq C^*_f(Q).
\end{align}
On the other hand, it follows from \eqref{eq:c_1} and \eqref{eq:c_2} that
\begin{align}
	P_{X_1Y_1Y_0}(x,y,y')=P_{X_1Y_0}(x,y')Q(y|x,y')
\end{align}
and $P_{X_1Y_0}$
satisfies the constraint \eqref{eq:feedback_constraint}. As a consequence, 
\begin{align}
I(X_1;Y_1|Y_0)\leq C^*_f(Q).
\end{align}
Therefore, 
\begin{align}
I(X_1;Y_1|Y_0)=C^*_f(Q),
\end{align}
i.e., $P_{X_1Y_0}$ is  a maximizer of the optimization problem in \eqref{eq:feedback_capacity}--\eqref{eq:feedback_constraint}.
Hence, the only inequality in  \eqref{eq:comb_2} must in fact hold with equality, which forces
\begin{align}
	P_{Y^nY_0}(y^n,y_0)&=P_{Y_0}(y_0)\prod\limits_{t=1}^nP_{Y_t|Y_{t-1}}(y_t|y_{t-1})\nonumber\\
	&\stackrel{(d)}{=}P_{Y_0}(y_0)\prod\limits_{t=1}^nP_{Y_1|Y_0}(y_t|y_{t=1}),\label{eq:forced_Markov}
\end{align}
where ($d$) is due to \eqref{eq:c_2}.
%Moreover, when $\delta$ is sufficiently small,  Lemma \ref{lem:refined_version} implies that $\hat{P}_{X_1Y_0}=P^*_{X_1Y_0}$, which in turn implies
%5\begin{align} \hat{P}_{Y_1|Y_0}=\hat{P}^*_{Y|Y'}.\label{eq:c_3}
%	\end{align}
Combing \eqref{eq:c_2} and \eqref{eq:forced_Markov} proves that $P_{X^nY_0}$ has the desired properties.

%\section{Proof of Theorem \ref{thm:conditional}}\label{app:conditional}

% use section* for acknowledgement
%\section*{Acknowledgment}

%The authors would like to thank...

% Can use something like this to put references on a page
% by themselves when using endfloat and the captionsoff option.
\ifCLASSOPTIONcaptionsoff
  \newpage
\fi


\begin{thebibliography}{1}
	
	\bibitem{Shannon56}
	C. Shannon, ``The zero error capacity of a noisy channel," {\em  IRE Trans. Inf. Theory}, vol.~2, no.~3, pp.~8--19, Sep. 1956.
	
		\bibitem{CT91}
	T. M. Cover and J. A. Thomas, {\em Elements of Information Theory}. New York, NY, USA: Wiley, 1991.
	

	
	\bibitem{Alajaji95}
F. Alajaji, ``Feedback does not increase the capacity of discrete channels
with additive noise," {\em IEEE Trans. Inf. Theory}, vol.~41, no.~2,
pp.~546--549, Mar. 1995.


	
	
	\bibitem{PAW14}
	H. H. Permuter, H. Asnani and T. Weissman,, ``Capacity of a POST channel with and
	without feedback," {\em IEEE Trans. Inf. Theory}, vol.~60, no.~10, pp.~6041--6057, Oct. 2014.
	
	
	\bibitem{BBT58}
	D. Blackwell, L. Breiman, and A. J. Thomasian, ``Proof of Shannon’s
	transmission theorem for finite-state indecomposable channels," {\em Ann.
		Math. Statist.}, vol.~29, pp.~1209--1220, Dec. 1958.
		
			\bibitem{Gallager68}
		R. G. Gallager, {\em Information Theory and Reliable Communication},
		New York, NY, USA: Wiley, 1968.
	
	
	\bibitem{SSP22}
	E. Shemuel, O. Sabag and H. H. Permuter, ``The feedback capacity of noisy output Is the sTate (NOST) channels," {\em IEEE Trans. Inf. Theory}, vol.~68, no.~8, pp.~5044-5059, Aug. 2022.
	
	
	\bibitem{CB05}
	J. Chen and T. Berger, ``The capacity of finite-state Markov channels with feedback," {\em IEEE Trans. Inf. Theory}, vol.~51, no.~3, pp.~780--798, Mar. 2005.
	
	\bibitem{Hajnal58}	
	J. Hajnal, ``Weak ergodicity in non-homogeneous Markov chains,"
	{\em Proc. Combridge Philos. Soc.}, vol.~54, pp.~233--246, 1958.
	
	\bibitem{CPW10}
	J. Chen, H. Permuter and T. Weissman, ``Tighter bounds on the capacity of finite-state channels via Markov set-chains,"  {\em IEEE Trans. Inf. Theory}, vol.~56, no.~8, pp.~3660--3691, Aug. 2010.
	

	


	
%\bibitem{Seneta79}
%E. Seneta, ``Coefficients of ergodicity: Structure and applications," {\em Adv.
%Appl. Probab.}, vol.~11, pp.~576--590, Sep. 1979.



\bibitem{Durrett2019}
R. Durrett, {\em Probability: Theory and Examples}, 5th ed. Cambridge, U.K.:
Cambridge Univ. Press, 2019.
	
	
	
	
	
	
\end{thebibliography}
\end{document}